    \newif\ifdraft
\newcommand{\defn}[1]{\emph{#1}}
\newcommand{\Act}{\mathrm{Act}}
\newcommand{\Agt}{\mathrm{Agt}}
\newcommand{\Tab}{\mathrm{Tab}}
\newcommand{\op}{\mathrm{op}}
\newcommand{\dist}{\mathrm{Dist}}
\newcommand{\calG}{\mathcal{G}}
\newcommand{\calA}{\mathcal{A}}
\newcommand{\calF}{\mathcal{F}}
\newcommand{\Arena}{\mathcal{A}}
\newcommand{\map}{\eta}
\newcommand{\pre}{\mathrm{Pre}}
\newcommand{\stay}{\mathrm{Stay}}
\newcommand{\strat}{\mathfrak{S}}
\newcommand{\uniform}{\mathcal{U}}
\newcommand{\cleq}{\sqsubseteq}
\renewcommand{\Pr}{\mathbb{P}}
\newcommand{\esp}{\mathbb{E}}
\newcommand{\cyl}{\mathrm{Cyl}}
\newcommand{\F}{\Diamond}
\newcommand{\G}{\Box}
\newcommand{\X}{{\bigcirc}}
\newcommand{\bbN}{\mathbb{N}}
\newcommand{\bbRp}{\mathbb{R}_{\geq 0}}
\newcommand{\AS}{\mathrm{AS}}
\newcommand{\NZ}{\mathrm{NZ}}
\newcommand{\sem}[1]{\llbracket #1 \rrbracket}
\renewcommand{\Game}{\mathcal{G}}
\newcommand{\WinSet}{\mathrm{W}}
\newcommand{\LoseSet}{\mathrm{L}}
\newcommand{\Core}{\mathrm{Core}}
\newcommand{\achan}{}
\newcommand{\ecore}{\textsc{E-Core}\xspace}
\newcommand{\acore}{\textsc{A-Core}\xspace}
\newcommand{\cconf}{\mu}
\newcommand{\todo}[1]{ \textcolor{red}{\textbf{TODO}: #1}}
\newcommand{\todo}[1]{}
\newcommand{\OMIT}[1]{}
\newcommand{\modelname}{CSLCG}
\newcommand\anthony[1]{{\color{blue}
\tiny [#1 - \textbf{Anthony}]}}
\newcommand\daniel[1]{{\color{magenta}
\tiny [#1 - \textbf{Daniel}]}}
\newcommand\najib[1]{{\color{brown}
\tiny [#1 - \textbf{Najib}]}}
\newcommand\parosh[1]{{\color{cyan}
\tiny [#1 - \textbf{Parosh}]}}
\newcommand\anthony[1]{}
\newcommand\daniel[1]{}
\newcommand\najib[1]{}
\newcommand\parosh[1]{}
    \newif\ifextended
\title{Concurrent Stochastic Lossy Channel Games}
\titlerunning{CSLCG} 
\author{Daniel Stan}{EPITA, France \and \url{https://www.tudo.re/daniel.stan/}}{
    daniel.stan@epita.fr}{https://orcid.org/0000-0002-4723-5742}{}
\author{Muhammad Najib}{Heriot-Watt University, UK \and \url{https://valvestate.github.io/}}{m.najib@hw.ac.uk}{
    https://orcid.org/0000-0002-6289-5124
}{}
\author{Anthony Widjaja Lin}{University of Kaiserslautern-Landau, Max-Planck
Institute for Software Systems, Germany \and \url{https://anthonywlin.github.io/}}{awlin@mpi-sws.org}{
    https://orcid.org/0000-0003-4715-5096
}{supported by European Research Council under European Union’s Horizon
    research and innovation programme (grant agreement no 
\href{https://doi.org/10.3030/101089343}{501100000781})}
\author{Parosh Aziz Abdulla}{Uppsala University, Sweden \and \url{https://user.it.uu.se/~parosh/}}{parosh@it.uu.se}{
    https://orcid.org/0000-0001-6832-6611
}{}
\authorrunning{D. Stan, M. Najib, A.\,W. Lin, and P.\,A. Abdulla} 
\keywords{concurrent, games, stochastic, lossy channels, wqo,
finite attractor property, core, equilibrium, Nash} 
\begin{document}

\maketitle

\begin{abstract}
Concurrent stochastic games are an important formalism for the \textit{rational
    verification} of probabilistic multi-agent systems, which involves verifying
    whether a temporal logic property is satisfied in some or all game-theoretic
    equilibria of such systems. In this work, we study the rational verification of probabilistic multi-agent systems where agents can cooperate by communicating over \textit{unbounded lossy channels}. To model such systems, we present \textit{concurrent stochastic lossy channel games} (CSLCG) and employ an equilibrium concept from cooperative game theory known as the \textit{core}, which is the most fundamental and widely studied cooperative equilibrium concept. Our main contribution is twofold. First, we show that the rational verification problem is undecidable for systems whose agents have almost-sure LTL objectives. Second, we provide a decidable fragment of such a class of objectives that subsumes almost-sure reachability and safety. Our techniques involve reductions to solving infinite-state zero-sum games with conjunctions of qualitative objectives. To the best of our knowledge, our result represents the first decidability result on the rational verification of stochastic multi-agent systems on infinite arenas.
\end{abstract}


\section{Introduction}\label{sec:intro}
Rational verification concerns the problem of checking which temporal logic properties will be satisfied in game-theoretic equilibria of a multi-agent system, that is, the stable collective behaviours that arise assuming that agents choose strategies/policies rationally in order to achieve their goals~\cite{GHW17,abate2021rational}. The usual approach to rational verification is to model multi-agent systems as concurrent games~\cite{GHW17,GNPW20}. This involves converting a multi-agent system into a game where agents are represented by a collection of independent, self-interested players in a finite-state environment. The game is played over an infinite number of rounds, with each player/agent (we use these terms interchangeably throughout the paper) choosing an action to perform in each round. Each player's goal is typically given by a temporal logic formula, which the player aims to satisfy. The temporal logic formula may or may not be satisfied by the infinite plays generated from the game, assuming that the players act rationally to achieve their goals.

In this paper, unlike much of previous work in rational verification, we
consider systems that give rise to games with \textit{probabilistic transitions}
and \textit{infinitely many states}. In particular, we focus on systems that can
naturally be modelled by \textit{stochastic lossy channel games} \cite{AHAMS08}
in the \textit{multi-player} and \textit{concurrent} setting, where there are $n
\geq 2$ players who can make concurrent moves. Our setting generalises the
two-player turn-based framework presented in~\cite{AHAMS08}.
We call this model \textit{Concurrent Stochastic Lossy Channel Games} (CSLCG). 
This model can be used to analyse a wide class of systems that communicate through potentially unreliable FIFO channels, such as communication networks, timed protocols, distributed systems, and memory systems~\cite{DBLP:conf/lics/AbdullaAA16,DBLP:conf/lics/AbdullaJ93,DBLP:journals/tcs/AbdullaJ03,DBLP:journals/lmcs/AbdullaABN18}.
    Stochasticity can be used to represent uncertainty in both the 
    environment (e.g., branching and message losses) and the behaviour of  agents. 
    Incorporating such uncertainty is desirable from a practical standpoint, as real-world systems are expected to operate correctly even when communication is not perfect and agents' behaviour is not deterministic.
    In the context of memory systems, stochasticity is used as a
    fairness condition that prevents unrealistic scenarios where the shared
    memory is never updated by the processes~ \cite{AbdullaAAGK22,LahavNOPV21}.

Given the possibility of communication, albeit imperfect, among agents, it seems quite natural to assume that some form of cooperation may arise in games. Thus, a relevant and fundamental question within the rational verification framework is: \textit{``What temporal logic property is satisfied by the rational cooperation that emerges in such a setting?''}
To address this question, we consider an equilibrium concept from cooperative game theory called the \textit{core}~\cite{aumann61,scarf1967core,GKW19}, which is the most fundamental and widely-studied cooperative equilibrium concept. 
With this concept, the standard assumption is that there exists some mechanism\footnote{Such mechanism is assumed to be exogenous (e.g., via contracts) and beyond the scope of the present paper.} that the players in a game can use to make \textit{binding agreements}. These binding agreements enable players to cooperate and work in teams/coalitions, providing a way to eliminate undesirable equilibria that may arise in non-cooperative settings~\cite{GKW19,gutierrez2023cooperative}. We illustrate that this is also true in our setting in \Cref{ex:mitm}. Despite using a cooperative equilibrium concept, we emphasize that players are still self-interested, meaning they rationally pursue their individual goals. As such, the games we consider in this work are general-sum games instead of strictly positive-sum games, which are purely cooperative.

\subparagraph*{Contributions:}
We study the rational verification problem in CSLCG with the core as the key equilibrium concept. It is shown in \cite{GKW19} that the core of a game with \textit{qualitative objectives} can never be empty, which also applies to the setting considered in this paper. Thus, two relevant decision problems pertaining to rational verification in the present work are \ecore and \acore. \textsc{E-Core} asks whether there
\textit{exists} a strategy profile in the core satisfying a given property $
\Gamma $, whereas \textsc{A-Core} asks whether \textit{all} profiles in the core
satisfy $ \Gamma $. We first show that these problems are undecidable for games
in which the players' objectives and property $\Gamma$ are almost-sure LTL
formulae, i.e., of the form $\AS(\varphi)$, where $\varphi$ is a LTL formula. We consider LTL \textit{with regular valuations}~\cite{LTL-regular} where the set of \textit{states/configurations} satisfying an atomic proposition is represented by a regular language.
Then, for our main contribution, we show the following: when players' goals are given 
as \textit{almost-sure reachability} or \textit{almost-sure safety} objectives, and the property $ \Gamma $ is given as \textit{almost-sure reachability}, \textit{almost-sure safety}, or \textit{almost-sure B\"uchi},
the problems of \textsc{E-Core} and 
\textsc{A-Core} become decidable. 
Our decidability proof is obtained via a reduction to \textit{concurrent} 2.5-player\footnote{Henceforth, we use the usual terms 1.5-player and 2.5-player games for, respectively, one-player and two-player stochastic games.} lossy channel games with conjunction of
objectives. This approach differs from previous work in two ways. First, our reduction considers \textit{concurrent} plays, in contrast to \textit{turn-based} 2.5-player games considered by \cite{AHAMS08}. Second, we do not assume finite-memory strategies, as opposed to finite-memory assumption in \cite{ACMS14,BBS07}. This is because finite-memory strategies do not ensure determinacy~\cite{Martin75}: It is possible that none of the players have a winning strategy in a given
concurrent 2.5-player game, even in the finite state case with simple
objectives~\cite{AHK07}. Therefore, general strategies (which may require infinite memory) are
required in equilibrium concepts such as the core, where players (or coalitions)
may try to satisfy their objectives while simultaneously preventing other players from achieving theirs. However, the main challenge in using these strategies in the infinite state case is the issue of representation. To address this, we provide a novel encoding of strategies in our proof of decidability.
To our knowledge, this is the first decidability result on 
the rational verification of stochastic multi-player games with infinite-state arenas.

\newcommand{\tower}[1]{
  \coordinate (b1) at (#1);
  \coordinate (b2) at ($(b1)+(1,0)$);
  \coordinate (base) at ($(b1)!0.5!(b2)$);
  \coordinate (top) at ($(base)+(0,+1.4)$);
  \coordinate (t1) at ($(top)+(-0.25,0)$);
  \coordinate (t2) at ($(top)+(+0.25,0)$);

  \node[thick, above of=top, inner sep=2, draw, circle, node distance=10] (ant) {};
  \draw[thick] (b1) -- (t1) -- (t2) -- (b2)
        (top) -- (ant);
  
  \coordinate (l0) at ($(b1)!0.1!(t1)$);
  \coordinate (r0) at ($(b2)!0.1!(t2)$);
  \coordinate (l1) at ($(b1)!0.4!(t1)$);
  \coordinate (r1) at ($(b2)!0.4!(t2)$);
  \coordinate (l2) at ($(b1)!0.7!(t1)$);
  \coordinate (r2) at ($(b2)!0.7!(t2)$);

  \draw (r0) -- (l1) -- (r2) -- (t1)
        (l0) -- (r1) -- (l2) -- (t2);

  \draw (ant) ++(45:.3) arc (45:-45:.3)
        (ant) ++(45:.5) arc (45:-45:.5)
        (ant) ++(40:.8) arc (40:-40:.8);
}

\newcommand{\channel}[1]{
\draw
    let
        \p1 = (begin),
        \p2 = (end),
        \p3 = (0,0.3)
    in
        ($(\x1,\y1+\y3)$) -- ($(\x2,\y2+\y3)$)
        ($(\x1,\y1-\y3)$) -- ($(\x2,\y2-\y3)$);

\coordinate (next) at (end);
\begin{scope}[every node/.style={draw, rectangle, node distance=0,
    execute at begin node={$\vphantom{b}$}}]
    \foreach \i in {#1}{
    \node[left of=next, anchor=east] (item) {$\i$};
    \coordinate (next) at (item.west);
    }
\end{scope}
}
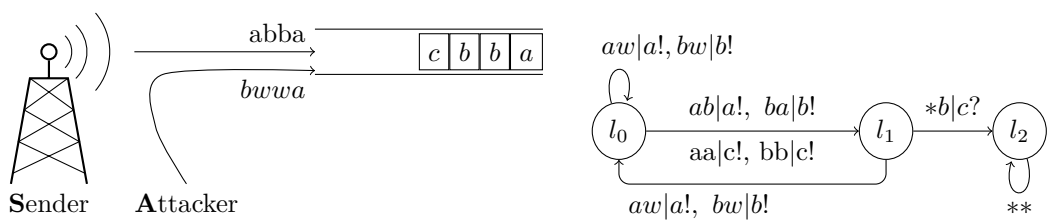
\begin{figure}
\begin{center}
\begin{tikzpicture}
\tower{$(0,0)$}
\node[below of=base, node distance=0,anchor=north] (sender)
    {\textbf{S}ender};

\coordinate (begin) at ($(ant)+(3.5,0)$);
\coordinate (end) at ($(begin)+(3,0)$);
\node (bla) at ($(ant)+(1,0)$) {};
\channel{a,b,b,c}

\coordinate (here) at (10,0);
\begin{scope}[xscale=-1,opacity=0.0]
\tower{here}
\node[below of=base, node distance=0,anchor=north]
    {\textbf{R}eceiver};
\end{scope}

\node[right of=sender, anchor=west] (attacker) {\textbf{A}ttacker};
\draw[->] (bla) edge
    node[pos=1, above,anchor=south east] {abba}
    (begin);
\draw[->] plot [smooth]
    coordinates {(attacker.north)
    ($(attacker)!0.5!(bla)$)
    ($(begin)+(-2,-0.30)$)
    ($(begin)+(0,-0.25)$)}
    node[below, anchor=north east] {$bwwa$}
    ;

        \begin{scope}[every node/.style={draw, ellipse}]
            \node (l0) at (8,0.7) {$l_0$};
            \node[right of=l0, node distance=10em] (l1) {$l_1$};
            \node[right of=l1, node distance=5em] (l2) {$l_2$};
        \end{scope}
        \draw[->] (l0) edge[loop above] node[above, anchor=south west,xshift=-1em] {$aw|a!,
                bw|b!$} (l0)
            edge
                node[above] {$ab|a!,~ba|b!$}
                node[below] {aa|c!,~bb|c!} 
                (l1)
            (l1) edge
                node[above] {$*b|c?$} (l2)
            (l2) edge[loop below] node[below] {$**$} (l2);
        \draw[->,rounded corners] (l1) -- ++(0,-2em)
            -- ++(-10em,0) node[below, anchor=north west] {$aw|a!,~bw|b!$} -- (l0);
\end{tikzpicture}
\end{center}
\caption{A simple adversarial transmission system from \cref{ex:2plexample} (left) and
    its graphical representation as a $2$-player arena (right). The resulting
    operation $f$ on the channel is written after $|$ and omitted if $f={nop}$.
}
\label{fig:example2}
\end{figure}
\begin{example}
    \label{ex:2plexample}
    To illustrate the model, we consider
    a simple transmission system depicted
    in \Cref{fig:example2} throughout the article. In this example,
    \defn{Sender} (player $1$) tries to emit some message of either type $a$ or $b$.
    \defn{Attacker} (player $2$) is trying to scramble the communication by
    concurrently choosing the same message type (action $a$ or $b$).
    Moreover, Attacker cannot scramble the communication two times in
    a row and has to wait (action $w$) otherwise.
    The \modelname{} arena is depicted on the right with the corresponding
    transitions and an extra location, reachable by a unilateral decision of
    player $1$ by reading a $c$-letter from the channel, which is possible only in case of a
    successful scrambling.
    Note that although the game structure is deterministic in this example, some stochastic
    behaviour will still appear both from message losses and from players'
    strategies, which are played concurrently.
    As a more concrete example of Attacker's objective, one could specify
    the condition
    ``reaching $l_2$ almost-surely, while not having more than 3 queued messages
    with positive probability''. Note that this is a conjunction of
    reachability and safety conditions over locations and regular sets of
    channel configurations.
\end{example}

\subparagraph*{Related Work:}
As already mentioned above, the most relevant work w.r.t. verification of CSLCG is \cite{AHAMS08}, which
shows decidability of \textit{two-player turn-based} stochastic lossy channel games with
almost-sure reachability or almost-sure B\"{u}chi objectives. This work was
extended to parity conditions in \cite{ACMS14}, where decidability can be shown
\emph{assuming finite-memory strategies}; otherwise, it is already undecidable
for 1.5-player games over lossy channel systems with almost-sure co-B\"{u}chi
objectives \cite{BBS07}. 
%
We are not aware of any work on rational verification of concurrent stochastic games over infinite
arenas. \cite{HLNW21} studies verification of the core in a probabilistic setting, while \cite{aminof2019probabilistic} presents \textit{Probabilistic Strategy Logic}, which can be used to characterize the core. However, both of these works are in a finite state setting only. Without probability, we mention the work \cite{MP15,CSW16} on
concurrent (deterministic) pushdown games with multiple players. In particular, ATL* model
checking is decidable in such games, which allows one to reason about the
core. Additionally, there has been work on pushdown \textit{module-checking}, which provides some element of non-determinism through an (external) environment. \cite{aminof2013pushdown} examines the imperfect information setting, while \cite{bozzelli2020module} studies multi-agent systems with ATL* specifications. Note that lossy channel systems, which are the focus of our work, are inherently different from the models considered in these studies.

\subparagraph*{Organization:}
\Cref{sec:prelim} introduces preliminary definitions and notations.
\Cref{sec:clcg} describes
concurrent lossy channel games and the special case of 2.5-player zero-sum games.
\Cref{sec:core-veri} presents a characterization of the core, the problems \ecore and \acore, a procedure to solve them, and an undecidability result of \ecore and \acore.
\Cref{sec:zerosum} addresses the computability of winning regions for concurrent 2.5-player zero-sum games and provides
algorithms to compute such regions.
\Cref{sec:conjobj} studies the conjunction of objectives, while \cref{sec:core}
presents our main result on the decidability of \ecore and \acore. Finally, \Cref{sec:conclusion} concludes with a discussion and future work.

\section{Preliminaries}
    \label{sec:prelim}
     
    For a finite alphabet $\Sigma$, the set of finite sequences, called \defn{words},
    is written $\Sigma^*$.
    Given two words
        $u,v\in \Sigma^*$,
        we write $u\cdot v$ for their concatenation and extend this notation
        to sets
        of words. Given $L\subseteq \Sigma^*$, $L^+$ denotes the smallest set
        containing $L$ and closed under concatenation and
        $L^*=\{\epsilon\}\cup L^+$ with $\epsilon$ the empty word.
        The class of \defn{regular languages} is the smallest class containing
        $\Sigma$, closed under difference, union, Kleene star and concatenation.
        We refer to \cite{ullman01} for further references about regular
        expressions and their link to automata theory.

    Let $S$ denote a countable set, for example $S=\Sigma^*$.
    A \defn{well-quasi-ordering}~\cite{FS01} (wqo) $\preceq$ over $S$ is a quasi-ordering
    (i.e. reflexive and transitive binary relation)
    such that any infinite sequence $(s_i)_{i \in \bbN}$ of elements of $S$
    contains an increasing pair $i<j$ such that $s_i \preceq s_j$.
    As an example, Higman's lemma \cite{higman52} states that the
    \defn{sub-word ordering} $\preceq$ defined below is a wqo over
    $\Sigma^*$:
    \begin{definition}
        For any $w,w'\in \Sigma^*$, $w\cleq w'$ if
        $w$ can be written $w=w_0 \cdot w_1 \cdots w_n$ and
        $w'\in \Sigma^*\cdot \{w_0\} \cdot \Sigma^* \cdots
            \Sigma^* \cdot \{w_n\} \cdot \Sigma^*$.
    \end{definition}

    A subset
    $U\subseteq S$ is \defn{upward-closed} (UC) if for every
    $s\preceq t$ such that $s\in U$, we also have $t\in U$.
    Any UC set can be uniquely represented by its set of minimal elements,
    which is finite (wqo property). A \defn{downward-closed} (DC) set is defined
    in a similar manner.
    In particular, any UC or DC set w.r.t. the
    sub-word ordering is a regular language.

    A \defn{distribution over $S$} is an array $\delta\in \bbRp^{S}$
    of values $\delta[s]\in\bbRp$ for any $s\in S$, such that
    $\sum_{s\in S}\delta[s]=1$.
    If $X$ is finite and non-empty, we write $\uniform(X)$ for the
    \defn{uniform distribution over $X$}, namely
    $\forall x\in X, \uniform(X)[x] = 1/|X|$.
    We use the notation $\dist(S)$ to denote the set of distributions over $S$.

    Let $S^{\omega}$ denote the set of infinite sequences over $S$, called
    \defn{paths}.
    A set $X\subseteq S^{\omega}$ is a
    \defn{cylinder} if it is of the form
    $s_0 \cdots s_n \cdot S^{\omega}$ for some $s_0 \cdots s_n \in S^*$;
    Such a set is written $\cyl(s_0\cdots s_n)$.
    We introduce $\calF(S)$ as the smallest family of sets of paths containing
    all cylinders, closed under complementation, and countable union.
    Such sets of $\calF(S)$ are called \defn{measurable}.

    Given an initial state $s_0\in S$ and a mapping
    $\map: S^+ \rightarrow \dist(S)$
    from \defn{histories} to distributions over $S$, we define a partial function
    $\Pr: \calF(S) \rightarrow \bbRp$ such that
    $\Pr(\cyl(s_0))=1$,
    $\forall s\neq s_0, \Pr(\cyl(s))=0$,
    and for all $h\cdot s\in S^+\cdot S$,
    $\Pr(\cyl(h\cdot s))=\Pr(\cyl(h))\cdot\map(h)[s]$.
    Carathéodory's criterion~\cite{royden2010real} ensures this definition is
    well and
    uniquely defined on all $\calF(S)$, and $\Pr$ is therefore
    called a \defn{probability measure}.

    We describe infinite paths with the logic LTL,
    whose
    usual semantics over infinite words in $S^{\omega}$,
    leads to measurable sets~\cite[Remark~10.57]{BK08}.
    More precisely, we consider LTL with regular valuations~\cite{LTL-regular},
    where atomic propositions are represented by regular languages. We focus on the fragment without the ``until'' operator:
    \begin{definition}[LTL~\cite{Pnueli77}]
        \label{def:ltl}
        An LTL($\X,\F$) formula is any $\varphi$ in the following grammar,
        where $\nu$ ranges over regular languages over $S$:
        \[
            \varphi ::= 
                \nu
                ~|~\varphi\wedge\varphi
                ~|~\varphi\vee\varphi
                ~|~\X \varphi
                ~|~\F\varphi
                ~|~\G\varphi
        \]
    \end{definition}
    Here, $\G \varphi$, $\F \varphi$ and $\G \F \varphi$ denote
   \defn{always} $\varphi$, \defn{eventually} $\varphi$ and \defn{infinitely often} $\varphi$,
   respectively.

            Let $I$ be a set of indices and $V$ be a set of values.
            A \defn{profile} $\vec{v}$ is a mapping from $I$ to $V$, where
            $v_i$ is the value assigned to $i$. In particular if $I=\{1\hdots n\}$, then
            $\vec{v}=(v_1\hdots v_n)$.
            We introduce a fresh symbol $\bot\notin V$, and define the notation
            $\vec{v}_{-i}$ as the profile where $v_i$ has been
            replaced by $\bot$. Given any other value $w\in V$, we write
            $(\vec{v}_{-i},w)$ for the profile where the value assigned to $i$
            has been replaced by $w$. We extend these notations to a subset $ Y \subseteq I $ in the usual way, i.e., $ \vec{v}_{-Y} $ denotes $ \vec{v} $ where each $ v_i, i \in Y $ is replaced by $ \bot $ and $ (\vec{v}_{-Y},\vec{v}'_{Y}) $ where $ v_i $ is replaced by $ v_i' $ for each $ i \in Y $.


\section{Lossy Channel Games}
    \label{sec:clcg}
    In this section, we provide formal definitions for the game model and
    the equilibrium concept that we use.
    While the model definitions are direct generalizations of those in~\cite{AHAMS08}\todo{cite others}, the
    concurrent setting requires extra care. In this setting, all players
    choose their actions concurrently and independently. The resulting
    action profile is evaluated on the game graph, which provides a
    (distribution of) channel operation to apply and a successor control state.
    Message losses are then processed.

    \subsection{Lossy Channel}
    For simplicity, this article focuses on a single
    channel system. The
    \defn{channel configuration} is represented by a word $\cconf\in M^*$, where $M$ is a finite alphabet of \defn{messages}.
    This channel is subject to stochastic
    \emph{message losses}, meaning that every message has a fixed
    probability $\lambda\in(0,1)$
    of being lost at every round, independently of other messages.

    We can derive the following probability values:
\begin{example}[Message Losses]
    \label{ex:messagelosses}
For any $\mu,\mu'\in M^*$, let us write
$P_{\lambda}({\mu,\mu'})$ for the probability of transitioning from channel configuration $\mu$ to
$\mu'$ after random message losses.
For example, $P_{\lambda}({\mu,\mu}) = (1-\lambda)^{|\mu|}$ (no message loss) and
whenever $\mu'\not\preceq \mu$ (not a sub-word), we have
$P_{\lambda}({\mu,\mu'})= 0$.
Moreover, for a single message letter $a\in M$,
$P_{\lambda}(a^n, a^m) = \binom{n}{m} \lambda^{n}(1-\lambda)^{n-m}$.
\end{example}

    As we will see later in~\Cref{sec:zerosum}, the exact value of $\lambda$ is
    not relevant for the qualitative probabilistic objectives considered
    in~\Cref{def:qualregions}.

    \subsection{Channel Operations}
    \begin{definition}
        \label{def:channelop}
        A \defn{channel operation} $f$ is defined as one of these
        three type of partial functions:
    \begin{itemize}
        \item If $f={nop}$ then $f(\cconf)=\cconf$;
        \item If $f=!m$ then $f(\cconf) = m\cdot \cconf$;
        \item If $f=?m$ and $\cconf=w\cdot m$,
            then $f(\cconf) = w$ and $f(\cconf) = \bot\notin M^*$ otherwise.
    \end{itemize}
    The set of all such partial functions is denoted $\op_{M}$.
\end{definition}
Intuitively, $nop$, $!m$, and
$?m$ denote ``no action'', ``enqueue the message $m$'',
and ``dequeue the message $m$'', respectively.
If the channel configuration
does not end with the message $m$, then the effect of $f=?m$ is the fresh
symbol $\bot$, indicating that the operation is not allowed.

\subsection{Lossy Channel Arena}
    Players choose channel operations through an \textit{arena}, which
    specifies a control state (\defn{location}) that defines the actions
    allowed by the players and the resulting effects on the channel:
\begin{definition}
    \label{def:clcg}
    A $n$-player
    \defn{concurrent stochastic lossy channel arena} (\modelname{} arena)
    is a tuple
    $\calA = (\Agt, L,M,\{\Act_i\}_{i \in \Agt},\Tab,l_0)$
    where:
    \begin{itemize}
    	\item $\Agt=\{1\hdots n\}$ is the set of agents;
    	\item $L$ is the set of locations;
    	\item $l_0\in L$ is the initial location;
    	\item $M$ is the message alphabet;
    	\item For each $ i \in\Agt $, $\Act_i$ is a finite set of actions available to agent $ i $ and require these sets to be pairwise disjoint. We write $ \Act $ for $ \prod_{i\in\Agt} \Act_i $; and
    	\item $ \Tab: L\times \Act \rightarrow
    	\dist(L\times \op_{M}) $.
    \end{itemize}
    A \defn{configuration}, or \defn{state} of a \modelname{} arena $ \calA $ is a word
    $s=l\cdot \mu$ composed of a location and a channel
    configuration.
    The \defn{state space} is denoted by $S=L \cdot M^*$ and
    the \defn{initial state} is $s_0 = l_0\cdot \epsilon \in S$.
\end{definition}
In the rest of the section we assume $\calA$ to be fixed.

\subsection{Concurrent Actions and Strategies}
    Since actions are taken concurrently, players must be prevented from
    taking certain actions that could result in illegal channel operations.
    In general, the set of allowed actions and strategy decisions
    will depend on the state (location \emph{and} channel)
    of the arena, as formalized below:
    \begin{definition}[Allowed Actions and Strategies]
    For a configuration $s=l\cdot \mu$ and a player $i\in\Agt$, we define
    $\Act_i(s)$ as the set of \defn{allowed actions} $\alpha$
    such that if
    $\exists \vec{\beta}\in \Act^n:
            \Tab(l, (\vec{\beta}_{-i},\alpha))[l',f] > 0$,
    then $f(\mu)\neq \bot$.
    A \defn{strategy for player $i$} is a mapping $\sigma_i$
    from sequences of states (\defn{histories}),
    to distributions of allowed actions. Namely, for all
        $h\cdot s \in S^+$, we have
        $\sigma_i(h\cdot s) \in \dist(\Act_i(s))$.
    We write $\sigma_i(\alpha~|~h\cdot s)$ as a shorthand for $\sigma_i(h\cdot s)[\alpha]$.
    The set of strategies of $i$ is written $\strat_i$ and
    $\vec{\strat}= \prod_{i \in \Agt} \strat_i$ is the set of \defn{strategy profiles}.
    \end{definition}
    
\begin{remark}
	\label{rem:jointdist}
	A strategy profile $\vec{\sigma}$
    can be seen as a strategy of a single player
	taking any (distribution of) action(s) $\vec{\alpha}\in \Act$ 
    from history $h\cdot s\in S^+$ with probability
	\[
	\vec{\sigma}(\vec{\alpha}~|~h\cdot s)=\prod_{i\in\Agt} \sigma_i(\alpha_i~|~h\cdot s)
	\]
	However the converse does not always hold:
	for example if $\Act_i(s)=\{a,b\}$ for $i\in \{1,2\}$,
	there exist no pair of strategies $\sigma_1,\sigma_2$ such that
	$\vec{\sigma}(aa~|~s) = \vec{\sigma}(bb~|~s)=1/2$ and
	$\vec{\sigma}(ab~|~s) = \vec{\sigma}(ba~|~s)=0$.
\end{remark}

\subsection{Semantics}
    \label{ssec:semantics}
    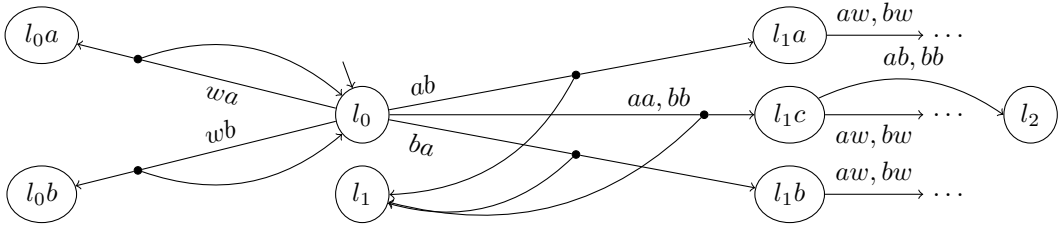
\begin{figure}
	\centering
    \begin{tikzpicture}
      \begin{scope}[every node/.style={draw, ellipse, node distance=30}]
        \node (l0) {$l_0$};
        \node[right of=l0, node distance=160] (l1c) {$l_1c$};
        \node[left of=l0, node distance=120, opacity=0] (l0ph) {};
        \node[above of=l1c] (l1a) {$l_1a$};
        \node[below of=l1c] (l1b) {$l_1b$};
        \node[below of=l0] (l1) {$l_1$};
        \node[above of=l0ph] (l0a) {$l_0a$};
        \node[below of=l0ph] (l0b) {$l_0b$};
      \end{scope}
      \begin{scope}[every node/.style={draw,ellipse, inner sep=1,fill=black}]
        \node (l0l1a) at ($(l0)!.5!(l1a)$) {};
        \node (l0l1b) at ($(l0)!.5!(l1b)$) {};
        \node (l0l1c) at ($(l0)!.8!(l1c)$) {};
        \node (l0l0a) at ($(l0)!.7!(l0a)$) {};
        \node (l0l0b) at ($(l0)!.7!(l0b)$) {};
      \end{scope}

      \draw (l0) edge[<-] +(110:0.75);
      \draw (l0) edge node[above, sloped,pos=0.2] {$ab$} (l0l1a)
                 edge node[above, yshift=-0.2em, pos=1, anchor=south east, sloped] {$aa,bb$} (l0l1c)
                 edge node[below, sloped, pos=0.2] {$ba$} (l0l1b)
                 edge node[below, sloped,pos=0.7,anchor=north west] {$wa$} (l0l0a)
                 edge node[above, sloped,pos=0.7,anchor=south west] {$wb$} (l0l0b);
      \draw[->] (l0l1a) edge (l1a)
                (l0l1c) edge (l1c)
                (l0l1b) edge (l1b)
                (l0l0a) edge (l0a)
                (l0l0b) edge (l0b);
      \draw[<-] (l1) edge[bend right] (l0l1a)
                     edge[bend right] (l0l1b)
                     edge[bend right] (l0l1c)
                (l0) edge[bend right] (l0l0a)
                     edge[bend left] (l0l0b);
      \draw[->] (l1a) edge
            node[above] {$aw,bw$} node[pos=1,anchor=west] {$\hdots$} +(5em,0);
      \draw[->] (l1b) edge
            node[above] {$aw,bw$} node[pos=1,anchor=west] {$\hdots$} +(5em,0);
      \draw[->] (l1c) edge
            node[below,sloped] {$aw,bw$} node[pos=1,anchor=west] {$\hdots$} +(5em,0em)
            edge[bend left] node[above,anchor=south] {$ab,bb$} 
                node[anchor=west,pos=1,draw,ellipse] {$l_2$} +(8em,0em);
                
    \end{tikzpicture}
	\caption{
        Fragment of the semantics of the \modelname{} arena of
        \cref{fig:example2} as a concurrent
        arena. Stochastic behaviours such as message losses are represented
        by the $\bullet$ nodes and probability values are omitted.
    }
	\label{fig:statespace}
\end{figure}

    The semantics of a \modelname{} arena can be understood as an $n$-player,
    infinite-state, concurrent stochastic game arena $(S, \Agt, \{\Act_i\}_{i \in \Agt},s_0, p)$ where
    $p: S \times \Act \rightarrow \dist(S)$\todo{cite}.
    Intuitively, from any state
    $s=l\cdot \mu$:
    (1) every player picks an allowed action from $\Act_i(s)$ based on the probability given by its strategy on the current history;
    (2) A new location $l'$ and a channel operation are then sampled
    according to $\Tab(l,\vec{\alpha})$. Since players only play allowed
    actions, the new channel configuration $f(\mu)$ is guaranteed to be defined; and (3) Message losses finally occur from $f(\mu)$ to some channel configuration
    $\mu'$, resulting in state $s'=l'\cdot \mu'$. This is formally defined below.
    An example of such a semantics is provided in \Cref{fig:statespace}.

\begin{definition}[Semantics of \modelname]
    \label{def:semclcg}
    For an initial state $s_0 \in L\cdot M^*$
    and a strategy profile
    $\vec{\sigma}=(\sigma_i)_{i\in\Agt}$ in
    the
    \modelname{} arena
    $\calA = (\Agt,L,M,\{\Act_i\}_{i \in \Agt},\Tab,l_0)$,
    we define $\Pr_{s_0}^{\vec{\sigma}}$ as the probability measure on
    $\calF(S)$ uniquely defined by its values on cylinders:
    \begin{itemize}
        \item $\Pr_{s_0}^{\vec{\sigma}}(\cyl(s_0)) = 1$;
        \item $\forall s\neq s_0, \Pr_{s_0}^{\vec{\sigma}}(\cyl(s)) = 0$;
        \item 
            For every pair of states
            $s=l\cdot \mu \in S$ and
            $s'=l'\cdot \mu' \in S$,
            and any history
            $h\in S^*$,
    \[
      \Pr_{s_0}^{\vec{\sigma}}(\cyl(h\cdot s \cdot s')) = 
      \Pr_{s_0}^{\vec{\sigma}}(\cyl(h\cdot s))
        \sum_{\vec{\alpha},f}
            \underbrace{
                \vec{\sigma}(\vec{\alpha}~|~h\cdot s)
            }_{(1)}
            \times
            \underbrace{
                \Tab(l,\vec{\alpha})[(l',f)]
            }_{(2)}
            \times
            \underbrace{
                P_{\lambda}(f(\mu),\mu')
            }_{(3)}.
    \]
    \end{itemize}
    When $s_0=l_0\cdot \epsilon$, we simply write
    $\Pr^{\vec{\sigma}}= 
     \Pr_{s_0}^{\vec{\sigma}} $.
\end{definition}

%

We now define qualitative objectives, that will be used to express
players' goals and the (global) properties to be checked:
\begin{definition}
	\label{def:qualregions}
	Let $\varphi$ be a measurable set of paths.
	For any strategy profile $\vec{\sigma}$ and state $s$, we define
    the \defn{property} $\NZ(\varphi)$ and $\AS(\varphi)$ by:
    \[
		\calA,
		\vec{\sigma},s\vDash
		\NZ(\varphi)
     \text{~for~}
		\Pr^{\vec{\sigma}}_s(\varphi) > 0
     \quad\text{and}\quad
		\calA,
		\vec{\sigma},s\vDash
		\AS(\varphi)
     \text{~for~}
		\Pr^{\vec{\sigma}}_s(\varphi) =1.
    \]
    We omit $\calA$ or $s$ when they are clear from context.

    We extend the definition to any conjunction of objectives: For a conjunction of $ \NZ $ and $ \AS $ objectives $ \Psi_1 \wedge \hdots \wedge \Psi_k $, we have
    \[
		\calA,
		\vec{\sigma},s\vDash
		\Psi_1 \wedge \hdots \wedge \Psi_k
     \text{~if, and only if, for all $i$~}
		\calA,
		\vec{\sigma},s\vDash
		\Psi_i.
    \]

\end{definition}
    
    We consider for $\varphi$ reachability or safety conditions of
    regular sets of states, namely LTL \textit{with regular valuations}~\cite{LTL-regular},
    and identify a formula $\varphi$ with its semantics
    $\mathcal{L}(\varphi)\subseteq (L\cdot M^*)^{\omega}$.
    For example, to specify the objective ``reaching $l_2$ almost-surely, while
    not having more than 3 queued messages with positive probability'', one would write:
    \[
        \Gamma = \AS(\F l_2\cdot M^*) \wedge 
        \NZ(\G L\cdot M^{\leq 3})
    \]

\begin{definition}
    \label{def:objectives}
    A \modelname{} is defined as an $n$-player arena together with
    such properties, called \defn{objectives}, for every players:
    $\calG=(\calA, \Phi_1 \hdots \Phi_n)$.
    In particular, when $n=2$, and $\Phi_1 = \neg\Phi_2$, we say that
    $\calG$ is a \defn{2.5 player zero-sum game}.
    A strategy $\sigma_i\in \strat_i$ such that
    $\forall \vec{\sigma}', (\vec{\sigma}'_{-i},\sigma_{i}), s\vDash \Phi$
    is called a \defn{winning strategy of $\Phi$ for $i$}.

\end{definition}

\begin{definition}\label{def:winlose}
	For a game $ \Game $, strategy profile $ \vec{\sigma} $, and state $ s $,
    we define the set of winners and losers by
    $ \WinSet_{\Game}(\vec{\sigma},s) = \{ i \in \Agt : (\vec{\sigma},s)
    \models \Phi_i \} $ and
    $ \LoseSet_{\Game}(\vec{\sigma},s) =
        \Agt \setminus \WinSet_{\Game}(\vec{\sigma},s) $.
    When $ s $ is the initial state we simply write
    $ \WinSet_{\Game}(\vec{\sigma}) $ and $ \LoseSet_{\Game}(\vec{\sigma}) $.
\end{definition}


\section{Verifying the Core}\label{sec:core-veri}
We consider a {cooperative} equilibrium concept called the {core}~\cite{aumann61,scarf1967core,GKW19}. 
Analogous to a Nash equilibrium (NE)~\cite{nash50}, a member of the core can be characterized by (the absence of) \textit{beneficial deviations}. However, unlike a Nash equilibrium where only one player can deviate, with the core, a \textit{group} or \textit{coalition} of players can deviate together. The notion of beneficial coalitional deviation is formally defined as follows.

\begin{definition}\label{def:ben-dev}
	For a strategy profile $ \vec{\sigma} $, we say that a joint strategy $ \vec{\sigma}_C', C \subseteq \Agt, C \neq \varnothing $, is a \textit{beneficial coalitional deviation} from $ \vec{\sigma} $ if $ C \subseteq \LoseSet_{\Game}(\vec{\sigma}) $ and for all $ \vec{\sigma}_{-C}' $, we have $ C \subseteq \WinSet_{\Game}((\vec{\sigma}_C',\vec{\sigma}_{-C}')) $.
\end{definition}

The core of a game $ \Game $ is defined to be \textit{the set of strategy profiles that admit no beneficial coalitional deviation}\footnote{An alert reader may notice the similarities between the core and \textit{strong NE}~\cite{aumann1959acceptable} and \textit{coalition-proof NE}~\cite{bernheim1987coalition}. The main difference is that these non-cooperative equilibrium concepts do not assume the existence of binding agreements. We refer to~\cite{gutierrez2023cooperative} for a more detailed discussion on this matter.}. We write $ \Core(\Game) $ to denote the set of strategy profiles in the core of $ \Game $.
%
We focus on two decision problems related to the core:
\textsc{E-Core} and \textsc{A-Core}. These problems are formally defined below.
\begin{quote}
	\emph{Given}: $(\Game, \Gamma)$ with game \(\Game\) and property \(\Gamma\). \\
	\emph{\ecore}: Does there exists some
	\(\vec{\sigma}\in\Core(\Game)\) such that \(\Game, \vec{\sigma}\models \Gamma\)?\\
	\emph{\acore}: Is it the case that for all \(\vec{\sigma}\in\Core(\Game)\) we have \(\Game, \vec{\sigma}\models \Gamma\)?
	%
\end{quote}
Note that, due to the duality of these problems, it is enough to provide a procedure for \textsc{E-Core}.


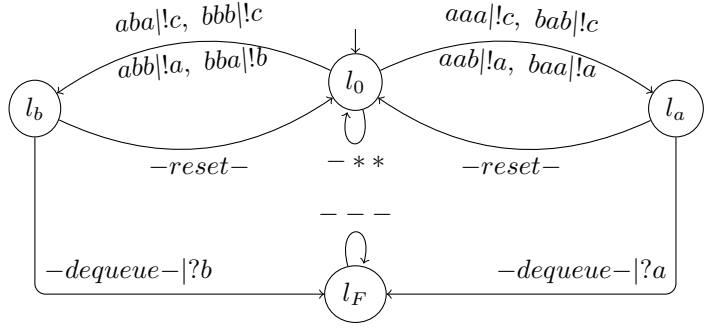
\begin{figure}
	\centering
    \begin{tikzpicture}
        \begin{scope}[every node/.style={draw, ellipse}]
            \node (l0) {$l_0$};
            \node (la) at (12em, -1em) {$l_a$};
            \node (lb) at (-12em, -1em) {$l_b$};
            \node (lf) at (0, -8em) {$l_F$};
        \end{scope}
        \draw[->,rounded corners] (l0) edge[<-] +(0,2em)
                       edge[bend left]
                       node[above, sloped]
                        {$aaa |\achan!c,~
                        bab|\achan!c$}
                       node[below, sloped]{$
                        aab|\achan!a,~
                        baa|\achan!a
                        $} (la)
                       edge[bend right] 
                       node[above, sloped]
                        {$aba | \achan!c,~
                        bbb|\achan!c$}
                       node[below, sloped]{$
                        abb|\achan!a,~
                        bba|\achan!b
                        $} (lb)
                      edge[loop below] node[below] {$-**$} (l0)
                   (lb)
                      edge[bend right]
                      node[below] {$-{reset}-$} (l0)
                      |-
                      node[below, anchor=south west] {$-{dequeue}{-}|\achan?b$} (lf);
        \draw[->,rounded corners]
                   (la)
                      edge[bend left]
                      node[below] {$-{reset}-$} (l0)
                      |-
                      node[below, anchor=south east] {$-{dequeue}{-}|\achan?a$} (lf)
                   (lf) edge[loop above] node[above] {$---$} (lf);
                   
%

    \end{tikzpicture}
	\caption{
    A graphical representation of a $3$-player arena. 
    Joint actions are ordered by $ SRA $, for example, $ xyz $ indicates that players $ S,R,A $ choose actions $ x,y,z$ respectively.
    }
	\label{fig:mitm}
\end{figure}

\begin{example}\label{ex:mitm}
	Consider a transmission system with one channel similar to the one discussed in \cref{ex:2plexample}. Now, suppose there are three players, $ S $ (sender), $ R $ (receiver), and
    $ A $ (attacker). 
    $ S $/$ R $ decides to send/request a message,
    $ a $ or $ b $. A message is delivered successfully if
    $ R $ reads (dequeues) the same type of message as she requested.
    An attack is successful if $ A $ chooses the same type of message being
    sent, and this turns the message into a $ c $. 
    We model this game as a
    \modelname\ where $ \Act_S = \Act_A =\{a,b,-\} $,
    $ \Act_R =\{a,b,{dequeue},reset\} $.
    The arena is depicted in \cref{fig:mitm}.
    The goals of $ S $ and $ R $ are given as
    $ \Phi_S = \AS(\G (L \cdot M^{\leq k})) $ for some $ k \in \bbN $ (almost surely the channel never exceeds
    size $ k $), and
    $ \Phi_R = \AS(\F(l_F \cdot M^*)) $ (almost surely a correct message is delivered).
    The goal of $ A $ is $ \Phi_A = \neg (\Phi_S \wedge \Phi_R) $.

	Consider the following strategy profile: if the channel contains fewer than $ k $ messages, $ S $ and $ R $
    play $ aa $ and $ bb $ uniformly at random. Otherwise, $ S $ plays action $ - $.
    This profile satisfies both $ \Phi_S $ and $ \Phi_R $, and is therefore in the core. In fact, \textit{all} strategy profiles in the core satisfy both $ \Phi_S $ and $ \Phi_R $. As a result, if we query \ecore with $ \Gamma = \Phi_A = \neg (\Phi_S \wedge \Phi_R) $, it returns a negative answer.
    This is in contrast to NE: since with NE $ S $ and $ R $ do not act as a coalition, there exists a (undesirable) NE in which there is a non-zero probability that the channel will exceed size $ k $ or a correct message will never be delivered (i.e., $ \Gamma $ is satisfied).
    $\blacksquare$
    \todo{Add explanation about NE and how it differs with the core}
\end{example}

%
To address \ecore, we first introduce the notion of \textit{concurrent two-player coalition game} (TPCG) as follows. 

\begin{definition}\label{def:coal-game}
	Let $ \Game = (\Arena,(\Phi_i)_{i \in \Agt}) $ be  a \modelname{} and 
    $ C \subseteq \Agt $, with
    the underlying arena
	$ \Arena = (\Agt,L,M,\{\Act_i\}_{i \in \Agt},\Tab,l_0) $.
    The \textit{concurrent two-player coalition game arena} is defined as
    $ \Arena^C = ((1,2),L,M,\{\Act_i\}_{i \in \Agt},\Tab',l_0) $
    where for all actions $\vec{\alpha},\vec{\beta}\in \Act$,
    the transition is determined by the projections on $C$ and
    $-C = \Agt \setminus C$, respectively for player $1$ and $2$:
    $ \Tab'(l,(\vec{\alpha},\vec{\beta}))=
        \Tab(l,(\vec{\alpha}_C,\vec{\beta}_{-C}))$.
    The TPCG with respect to $ \Game, C, $ and objective $ \Psi_C $ is thus
    defined as $ \Game^{C,\Psi_C} = (\Arena^C,(\Psi_C,\neg \Psi_C)) $.
\end{definition}

Observe that the ability of a coalition $ C $ to satisfy an objective $ \Psi_C $ depends on whether it has a winning strategy in TPCG $ \Game^{C,\Psi_C} $ from the initial state of the game (thus, the existence of beneficial deviation).  With this observation, we restate the characterization of \textsc{E-Core} from \cite{HLNW21} as follows.

\begin{proposition}[\cite{HLNW21}]\label{prop:core-char}
	A pair $(\Game,\Gamma)$ is a yes-instance of \textsc{E-Core} if and only if there exists $ W \subseteq \Agt $ such that
	\begin{enumerate}
		\item[(a)] there exists some $ \vec{\sigma} $ such that $ \Game, \vec{\sigma} \models \Phi_{W} $, and
		\item[(b)] for all $ C \subseteq \Agt \setminus W $, $ C $ has no winning strategy in $ \Game^{C,\Psi_C} $,
	\end{enumerate}
	where $ \Phi_{W} = \bigwedge_{i\in W} \Phi_i \wedge \bigwedge_{i\notin W} \neg\Phi_i \wedge \Gamma  $ and $ \Psi_C = \bigwedge_{i \in C} \Phi_i $.
\end{proposition}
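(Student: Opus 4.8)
The plan is to prove the biconditional directly, using the \emph{same} strategy profile as the witness on both sides, with everything pivoting on a single correspondence: a winning strategy for a coalition $C$ in the TPCG $\Game^{C,\Psi_C}$ is the same object as a beneficial coalitional deviation on behalf of $C$ in $\Game$. I would first isolate this as the main lemma. Because $\Arena^C$ is built by factoring $\Tab'(l,(\vec\alpha,\vec\beta))=\Tab(l,(\vec\alpha_C,\vec\beta_{-C}))$, a strategy of player $1$ in $\Game^{C,\Psi_C}$ is exactly a joint strategy $\vec\sigma_C'$ of the coalition $C$, a strategy of player $2$ is exactly a counter-strategy $\vec\sigma_{-C}'$ of $\Agt\setminus C$, and the probability measure induced by such a pair in the TPCG coincides with $\Pr^{(\vec\sigma_C',\vec\sigma_{-C}')}$ in $\Game$. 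Since player $1$'s objective is $\Psi_C=\bigwedge_{i\in C}\Phi_i$, the statement ``player $1$ wins against every player $2$'' unfolds literally to ``$\forall\,\vec\sigma_{-C}',\ C\subseteq\WinSet_{\Game}((\vec\sigma_C',\vec\sigma_{-C}'))$''. Hence $C$ has a winning strategy in $\Game^{C,\Psi_C}$ iff some joint coalition strategy $\vec\sigma_C'$ meets the second clause of \Cref{def:ben-dev}, the first clause $C\subseteq\LoseSet_\Game(\cdot)$ being supplied by whichever ambient profile we compare against.

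For the forward direction, given $\vec\sigma\in\Core(\Game)$ with $\Game,\vec\sigma\models\Gamma$, I would set $W:=\WinSet_\Game(\vec\sigma)$. By definition of the winner set, $\vec\sigma$ satisfies $\bigwedge_{i\in W}\Phi_i\wedge\bigwedge_{i\notin W}\neg\Phi_i$, which together with $\Gamma$ yields $\Game,\vec\sigma\models\Phi_W$, establishing (a) with witness $\vec\sigma$. For (b), any non-empty $C\subseteq\Agt\setminus W=\LoseSet_\Game(\vec\sigma)$ possessing a winning strategy in $\Game^{C,\Psi_C}$ would, by the correspondence, furnish a beneficial coalitional deviation from $\vec\sigma$, contradicting $\vec\sigma\in\Core(\Game)$.

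For the converse, given $W$ satisfying (a) and (b), I would take the witness $\vec\sigma$ from (a). The first two conjuncts of $\Phi_W$ force $\WinSet_\Game(\vec\sigma)=W$ exactly, while the last conjunct gives $\Game,\vec\sigma\models\Gamma$. To see $\vec\sigma\in\Core(\Game)$, observe that any candidate beneficial deviation is carried out by a non-empty $C\subseteq\LoseSet_\Game(\vec\sigma)=\Agt\setminus W$ guaranteeing $\bigwedge_{i\in C}\Phi_i=\Psi_C$ against all responses; by the correspondence this is a winning strategy for $C$ in $\Game^{C,\Psi_C}$, contradicting (b). Thus $\vec\sigma$ is a core profile satisfying $\Gamma$, i.e.\ a yes-instance of \ecore.

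I expect the genuine work to lie entirely in the correspondence lemma rather than in the two symmetric arguments above, and two points need care. First, the subtlety flagged in \Cref{rem:jointdist}: player $1$ in the TPCG employs a single, possibly correlated, distribution over $\Act_C$, so I must confirm that the coalition strategies admissible in a deviation are precisely these joint strategies, and that the allowed-action restrictions $\Act_i(s)$ lift correctly to player $1$'s and player $2$'s allowed moves in $\Arena^C$, so that the two induced semantics genuinely agree on every cylinder. Second, the boundary case $C=\varnothing$: the empty conjunction $\Psi_\varnothing$ is vacuously true, so I would read the quantifier in (b) over \emph{non-empty} $C$, matching \Cref{def:ben-dev}, and note that this convention affects neither direction.
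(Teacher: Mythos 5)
Your proof is correct, but note that the paper itself does not prove this proposition: it imports the characterization from \cite{HLNW21} and immediately turns it into the four-step procedure, so there is no in-paper argument to compare against. Your two symmetric directions (take $W=\WinSet_\Game(\vec\sigma)$ going forward; take the witness of (a) and rule out deviations via (b) going back) are the intended argument, and you correctly locate all of the content in the correspondence between player~1's winning strategies in $\Game^{C,\Psi_C}$ and beneficial coalitional deviations. The two caveats you flag are genuine and worth making explicit. The quantifier in (b) must indeed exclude $C=\varnothing$, since $\Psi_\varnothing$ is vacuously won, and this matches the requirement $C\neq\varnothing$ in \cref{def:ben-dev}. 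On the correlation point: the identification only works if the ``joint strategy'' $\vec\sigma_C'$ of \cref{def:ben-dev} is read as a single (possibly correlated) strategy over $\prod_{i\in C}\Act_i$, exactly as player~1 plays in \cref{def:coal-game}; if it were instead a tuple of independent strategies, then by \cref{rem:jointdist} player~1 in the TPCG would be strictly more powerful, and your forward direction (core membership implies no TPCG winning strategy) would break. The paper's whole reduction presupposes the correlated reading --- consistent with coalitions having binding agreements --- so your proof is sound under the reading the paper intends, but that assumption deserves to be stated rather than merely confirmed in passing.
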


Using \cref{prop:core-char}, we provide the following procedure for determining whether some $ (\Game, \Gamma) $ is a yes-instance of \textsc{E-Core}.
\begin{enumerate}
	\item Guess a set of winning players $ W \subseteq \Agt $;
	\item Check if there is a winning strategy $ \vec{\sigma} $ in TPCG $ \Game^{\Agt,\Phi_{W}} = (\Arena, (\Phi_{W}, \neg \Phi_{W})) $;
	\item Check if there is a coalition $ C \subseteq \Agt \setminus W $ with a winning strategy $ \vec{\sigma}_{C}' $ in TPCG $ \Game^{C,\Psi_C} = (\Arena^C,(\Psi_C,\neg \Psi_C)) $;
	\item If the answers to Step 2 is ``Yes'' and Step 3 is ``No'', then $ (\Game, \Gamma) $ is a yes-instance of \textsc{E-Core}. Otherwise, it is not. 
\end{enumerate}

Observe that above procedure corresponds to \cref{prop:core-char}. In particular, Steps 2 and 3 correspond precisely to (a) and (b) in \cref{prop:core-char}, respectively. Thanks to this procedure, the problem of checking whether $ (\Game,\Gamma) $ is a yes-instance of \textsc{E-Core} can be reduced to solving a collection of concurrent $2.5$-player zero-sum games.

Note that if we consider almost-sure LTL objectives, we
can construct a CSLCG $ \Game $ with one player whose goal is
$ \Phi = \AS((\G\F R_1) \wedge (\F\G R_2)) $. Then, we can set
$ \Gamma = \Phi $ and query \ecore. This reduces to a 1.5-player game over a lossy channel with objective $ \Phi $, which is already
undecidable~\cite[Lemma~5.12]{BBS07}. Therefore, we obtain the following result.

\begin{proposition}
	\label{prop:undec}
	For a pair $ (\Game, \Gamma) $ where players' objectives and $ \Gamma $ are given by
    almost-sure LTL formulae, the problems of \ecore and \acore are undecidable.
\end{proposition}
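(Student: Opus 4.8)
The plan is to reduce from the undecidability of \emph{single-controller} (1.5-player) probabilistic lossy channel games. By \cite[Lemma~5.12]{BBS07} it is undecidable whether, in such a system, the controller has a strategy to almost-surely enforce the objective $(\G\F R_1)\wedge(\F\G R_2)$, where $R_1,R_2$ are regular sets of configurations; this is a conjunction of a B\"uchi and a co-B\"uchi condition. I would take this problem as the source of hardness and transfer it into a one-player \modelname.

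First I would encode the given lossy channel system as a \modelname{} arena $\Arena$ with a single agent, $\Agt=\{1\}$. A one-player \modelname{} collapses, at the level of \Cref{def:semclcg}, to exactly a probabilistic lossy channel system: player $1$ resolves the only action component, $\Tab$ together with the message-loss kernel $P_\lambda$ reproduces the controller's probabilistic transitions, and regular valuations $R_1,R_2$ over $L\cdot M^*$ realize the atomic propositions. I then set $\Phi_1=\Phi=\AS((\G\F R_1)\wedge(\F\G R_2))$ as the objective of player $1$, and take the property to be checked as $\Gamma=\Phi$. By construction, a strategy $\vec\sigma=(\sigma_1)$ satisfies $\Game,\vec\sigma\models\Phi$ exactly when the controller's corresponding strategy almost-surely enforces $(\G\F R_1)\wedge(\F\G R_2)$.

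The key step is to show that, in this one-player instance, both \ecore and \acore coincide with the existence of a strategy satisfying $\Phi$. For \ecore I would apply \Cref{prop:core-char}: the only candidate winning sets are $W=\varnothing$ and $W=\{1\}$. For $W=\varnothing$ the formula $\Phi_W=\neg\Phi_1\wedge\Gamma=\neg\Phi\wedge\Phi$ is unsatisfiable, so condition~(a) fails; for $W=\{1\}$ we have $\Phi_W=\Phi_1\wedge\Gamma=\Phi$, condition~(b) is vacuous since $\Agt\setminus W=\varnothing$ admits no nonempty coalition, and condition~(a) asks exactly for a strategy with $\Game,\vec\sigma\models\Phi$. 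Hence $(\Game,\Gamma)$ is a yes-instance of \ecore if and only if player $1$ can win for $\Phi$. For \acore I would argue directly on $\Core(\Game)$: the only possible deviating coalition is $\{1\}$ itself, and by \Cref{def:ben-dev} a profile $\vec\sigma$ admits a beneficial deviation precisely when $1\in\LoseSet_\Game(\vec\sigma)$ (i.e. $\vec\sigma\not\models\Phi$) and some $\sigma_1'$ achieves $\Phi$. Consequently, if player $1$ can win then $\Core(\Game)$ is exactly the set of winning profiles, all of which satisfy $\Gamma=\Phi$, so \acore holds; if player $1$ cannot win then every profile is losing and hence in the core, yet none satisfies $\Gamma=\Phi$, so \acore fails. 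Thus \acore$(\Game,\Gamma)$ also holds if and only if player $1$ can win for $\Phi$.

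Combining the two reductions, a decision procedure for either \ecore or \acore on almost-sure LTL instances would decide the controller's winning problem of \cite[Lemma~5.12]{BBS07}, which is impossible; this yields the claimed undecidability. I expect the main obstacle to be the faithfulness of the encoding step rather than the core analysis: one must check that the single-player \modelname{} semantics, with its concurrent-action and message-loss kernel, reproduces the transition probabilities of the source lossy channel system precisely enough that almost-sure satisfaction is preserved, and that the B\"uchi/co-B\"uchi targets are expressible as regular valuations in $L\cdot M^*$. The collapse of the core to plain winning-strategy existence is then a routine consequence of \Cref{prop:core-char} and \Cref{def:ben-dev}.
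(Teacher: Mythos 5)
Your proposal is correct and follows essentially the same route as the paper: a one-player \modelname{} with goal $\Phi=\AS((\G\F R_1)\wedge(\F\G R_2))$ and $\Gamma=\Phi$, reducing to the undecidable 1.5-player problem of \cite[Lemma~5.12]{BBS07}. You merely spell out the collapse of \ecore and \acore to plain winning-strategy existence (via \Cref{prop:core-char} and \Cref{def:ben-dev}), which the paper leaves implicit.
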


In the following sections, we study the technical machineries required to solve
concurrent 2.5-player zero-sum games with almost-sure safety and almost-sure reachability
objectives. As motivated by \cref{prop:core-char} and its corresponding procedure for \ecore,
we further solve concurrent 2.5-player zero-sum games with a conjunction of objectives.
These provide the necessary foundation for the main decidability
result presented later in \cref{sec:core}.

\section{The Zero-Sum case is Effective}
\label{sec:zerosum}

In this section, we focus on solving concurrent 2.5-player zero-sum CSLCG as a crucial step towards our main decidability result for \ecore. We assume that the CSLCG $ \calG $ with arena $ \calA $ is fixed, and $ R $ represents a set of states, which may be infinite.
While \cite{AHK07} provides an approach for solving concurrent stochastic
games with $\AS(\F R)$, $\AS(\G R)$, $\NZ(\F R)$ and $\NZ(\G R)$
objectives in the \textit{finite-state} setting, here we extend their approach to the \textit{infinite-state} setting. We note that the infinite-state setting has been previously examined for \textit{turn-based} games in $\cite{BertrandS03,ABRS05,BaierBS06}$.

Our approach can be summarized as follows: First, we provide algorithms to
symbolically compute one-step
reachability for regular set of states. Then, we prove that the algorithms
from \cite{AHK07} remain valid (in terms of termination and correctness) for this
class.
More precisely, we fix a regular set $R\subseteq (L\cdot M^*)^*$ and
a 2.5-player zero-sum CSLCG $\calG$ with $\Agt = \{1,2\}$.
For $ i \in \Agt $, the objective $\Phi_i$ is of the form $\NZ(\F R)$
or $\AS(\F R)$.
Furthermore, by instantiating games for the opponent $-i$, setting
$R'=S\backslash R$, and by determinacy of such games~\cite{Martin75},
we also solve the game for objectives of the form $\AS(\G R')$ or $\NZ(\G R')$.
To this end, we focus on the existence of winning strategies (\cref{def:objectives}) and the computation of \textit{winning regions}.
\begin{definition}
    \label{def:winregion}
    For an objective $\Phi$ of player $ i $,
    the \defn{winning region of $\Phi$ for player $ i $} in $\calG$ is given
    by:
    $
        \sem{\Phi}_i(\calG) = \{s\in S~|~\exists \sigma_i\in \strat_i,
                \forall \vec{\sigma}'\in\vec{\strat},
            (\vec{\sigma}'_{-i},\sigma_{i}),s\vDash \Phi\}.
    $
\end{definition}

As previously mentioned, moving from a finite to an infinite state setting presents a challenge in the representation of winning strategies. To address this, in what follows, we provide a novel encoding of different classes of winning strategies necessary for ensuring a win.

\subsection{Regularity Properties of \modelname{}}
\label{ssec:regclcg}

We begin by revisiting the regularity results of lossy channel systems from
\cite{BertrandS03,ABRS05} to later design algorithms
for computing winning regions. In this context, the \textit{predecessor function} \todo{cite} and
the \textit{finite attractor}~\cite{BaierBS06} are essential notions that need to be adapted to the concurrent case.

\begin{definition}[Predecessor Function]
For a given player $i\in\Agt$
and a set of states $B\subseteq S$,
we write $\pre_i(B)$ the
set of states from which player $i$ can enforce reaching $B$ with
\textbf{positive probability}, no matter the actions of the other
players:
\[\pre_i(B) =
\{
s\in S~|~
\exists \sigma_i
\forall \vec{\sigma}'
\Pr^{(\vec{\sigma}'_{-i},\sigma_i)}_{s}(\X B)>0
\}
\]
\end{definition}

We summarize the properties below, which are all consequences of the lossy
nature of the system.
\begin{restatable}{proposition}{regclcg}
    \label{prop:regclcg}
    The order $\cleq$ defined for all $s=(l\cdot \mu),s'=(l'\cdot \mu')$ by
    $s\cleq s'$ if $s\preceq s'$ (sub-word) and
    $\mu,\mu'$ are equal or have the same last letter,
    is a wqo. Further, given $X\subseteq S$,
    \begin{enumerate}
        \item $\pre_i(X)$ is upward-closed for $\cleq$, hence regular;
        \item If $X$ is regular, then $\pre_i(X)$ can be computed;
        \item (\defn{Finite Attractor}~\cite{BaierBS06}) 
            There exists a finite set $A\subseteq S$,
            such that
            $\forall \vec{\sigma}\in\vec{\strat}~\Pr^{\vec{\sigma}}(\G\F A)=1$.
    \end{enumerate}
\end{restatable}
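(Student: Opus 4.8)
The plan is to establish the three claims of \Cref{prop:regclcg} in order, exploiting throughout that the lossy dynamics interact well with the sub-word ordering $\cleq$.

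\textbf{The ordering $\cleq$ is a wqo.} First I would show that $\cleq$ refines the plain sub-word ordering $\preceq$ on $S = L\cdot M^*$ only by a finite-index condition (agreement on location and on the last letter of the channel, or emptiness). Concretely, I would map each state $s=l\cdot\mu$ to the triple consisting of its location, a flag recording whether $\mu=\epsilon$ together with its last letter when nonempty, and the word $\mu$ itself. Since $L$ is finite and $M$ is finite, there are finitely many possible (location, last-letter)-classes, so $\cleq$ is a finite union of copies of $\preceq$ restricted to each class. A quasi-ordering that is a finite union of wqos (over a partition into finitely many classes) is again a wqo; combined with Higman's lemma giving that $\preceq$ is a wqo on $M^*$, this yields the claim. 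I would take care to verify reflexivity and transitivity of $\cleq$ directly from the definition, noting that the ``equal or same last letter'' clause is itself transitive.

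\textbf{Upward-closure and regularity of $\pre_i(X)$.} For item~1, I would argue that if $s\in\pre_i(X)$ via a strategy $\sigma_i$, and $s\cleq s'$, then player $i$ can mimic $\sigma_i$ from $s'$ to still reach $X$ in one step with positive probability. The key monotonicity is that a larger channel (in $\cleq$) can lose messages down to exactly match any loss-outcome of the smaller channel: the positive-probability one-step successors reachable from $s$ remain reachable from $s'$, since the added messages can simply be lost, and the agreement on the last letter guarantees that any $?m$ dequeue enabled at $s$ is still enabled at $s'$. This is precisely where the last-letter refinement in $\cleq$ earns its keep, ensuring allowed actions $\Act_i(s)\subseteq\Act_i(s')$ in the relevant sense. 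Upward-closure plus the wqo property gives a finite basis of minimal elements, and any $\cleq$-upward-closed set is regular (as noted in \Cref{sec:prelim}), proving regularity. For item~2, effectiveness, I would give the explicit symbolic one-step computation: from the finite presentation of $X$ and the finite transition table $\Tab$, compute the regular set of states having a positive-probability $\X X$ move under some fixed allowed action of $i$ against all opponent responses; this is a finite Boolean combination of regular pre-image constructions under the enqueue/dequeue/nop operations composed with the (regular, downward-closed) loss relation.

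\textbf{Finite attractor.} This I expect to be the main obstacle, and the one genuinely relying on the strictly positive loss probability $\lambda$. I would take $A$ to be the finite set of states whose channel length is below a suitable bound, say $A=\{l\cdot\mu : |\mu|\le B\}$ for $B$ chosen large enough relative to the number of locations and the maximal number of enqueues per step (here a single message per transition). The argument is that, regardless of the profile $\vec\sigma$, each message currently in the channel is lost with probability at least $\lambda>0$ at every round independently, so from any configuration the probability of the channel shrinking below $B$ within a bounded number of steps is bounded below by a constant independent of the current state; a standard Borel--Cantelli / recurrence argument then forces $\Pr^{\vec\sigma}(\G\F A)=1$. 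The delicate points are that this bound must be uniform over \emph{all} strategy profiles (so I would phrase it as a worst-case one-step loss estimate that does not depend on the players' choices, since losses are applied after the operation and independently of actions) and that enqueues add at most one message per round, so losses dominate growth once the channel is long enough. I would cite \cite{BaierBS06} for the underlying finite-attractor principle and argue that the concurrent action structure does not affect it, precisely because the loss mechanism in \Cref{def:semclcg} factors out as the independent component $P_\lambda(f(\mu),\mu')$ applied after the players' and the table's choices.
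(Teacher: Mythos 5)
Your treatment of the well-quasi-ordering and of items 1 and 2 follows the paper's own proof essentially step for step: the partition of states into finitely many (location, last-letter) classes combined with Higman's lemma; the mimicking argument for upward-closure of $\pre_i(X)$, where the last-letter agreement in $\cleq$ is exactly what keeps every $?m$ operation enabled and lets the surplus messages of the larger state be dropped; and the symbolic one-step computation of $\pre_i$ as a finite Boolean combination of the form $\bigcup_{\alpha}\bigcap_{\beta}\bigcup_{(l_1,l_2,f)} l_1\cdot f^{-1}(l_2^{-1}\cdot(\uparrow X))$. All of this matches the paper.

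The gap is in your finite-attractor argument. You propose to prove that ``from any configuration the probability of the channel shrinking below $B$ within a bounded number of steps is bounded below by a constant independent of the current state.'' That statement is false: fix a horizon $k$ and start from a channel of length $n$. Each of the $n$ initial messages survives all $k$ rounds independently with probability $(1-\lambda)^{k}>0$, at most $k$ messages can be dequeued in $k$ rounds, and a message present at round $k$ was present at every earlier round; hence the probability that the channel ever drops below $B$ within $k$ steps is at most the probability that fewer than $B+k$ of $n$ independent Bernoulli$\bigl((1-\lambda)^{k}\bigr)$ trials succeed, which tends to $0$ as $n\to\infty$. So no uniform lower bound over all states exists for any fixed horizon, and the Borel--Cantelli step you build on it collapses. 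The correct route --- the one taken in the paper's appendix and in \cite{BertrandS03,ABRS05} --- is a drift/recurrence argument: since at most one message is enqueued per round while each of the $n$ current messages is lost with probability $\lambda$, the expected one-step change in channel length is at most $1-\lambda n$, negative once $n>1/\lambda$. The paper implements this by a worst-case computation (messages always produced, never consumed) of $p_k=\prod_{j=1}^{k}\bigl(1-(1-\lambda)^{j}\bigr)$, the probability that the channel is empty at round $k$, shows $\sum_k p_k=\infty$ via Raabe's test, and concludes that $L\cdot\epsilon$ is a finite attractor. Your deference to the cited literature would rescue the conclusion, but the uniform-bound lemma you propose as your own justification does not hold and should be replaced by the drift argument.
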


\begin{proof}[Proof sketch]
    \begin{enumerate}
        \item If $s \cleq t$, and $s \in \pre_i(X)$, then an action from
            $s$ leads to $X$ with positive probability. Since $s\cleq t$
            entails the same last letter in $s$ and $t$'s channel,
            $\Act_i(s) = \Act_i(t)$ and the same action
            can be played by $i$ from $t$: With positive probability
            the extra messages in $t$ and not in $s$ can then be dropped;
        \item We refer to the computability section of \cite{ACMS14}, and argue
            that the concurrent setting can be simulated
            by a turn-based game where the first player $i$ commits to an
            action $\alpha$, then moves to a new state where the rest of the
            players provide their action, and move to a stochastic state where
            messages are actually dropped. $\pre_i$ can therefore be simulated
            by three calls to the predecessor function of \cite{ACMS14}.
        \item We refer to Corollary~5.3 in \cite{BertrandS03} or
            Lemma~5.3 in \cite{ABRS05}. Intuitively,
            having more messages on the channel increases the likelihood of a decrease in
            message count, hence the set $L\cdot \epsilon$ (any location, empty channel)
            is shown to a finite attractor.
    \end{enumerate}
\end{proof}

\begin{figure}
\begin{center}
\begin{tikzpicture}
   \def\topx{0}
   \def\h{1}
   \def\w{10}
   \newcommand{\vertbar}[1]{
      \draw[thick] ($(#1.north east)-(0.05,0)$) -- ($(#1.south east)-(0.05,0)$);
   }
 \begin{scope}[every node/.style={
    rectangle, draw,
    anchor=west,
    execute at begin node={$\vphantom{l_1\mu_1m^l_{k_0}}$}
 }]
   \node (l0) {$l_0$};
   \node at (l0.east) (mu0) {$\mu_0=m^0_1\hdots m^0_{k_0}$};
   \vertbar{mu0}
   \node at (mu0.east) (l1) {$l_1$};
   \node at (l1.east) (mu1) {$\mu_1$};
   \vertbar{mu1}
   \node[right of=mu1, node distance=50] (ln) {$l_n$};
   \node at (ln.east) (mun) {$\mu_n$};

 \end{scope}
 \node at ($(mu1)!0.5!(ln)$) {$\hdots$};
 \path (l0.north west) edge[decorate, decoration = {brace, raise=0.7}]
    node[above, yshift=1.7] {location}
    (l0.north east);
 
 \path (mu0.south east)
     edge[decorate, decoration={brace, raise=0.7}]
     node[below, yshift=-1.7] {Channel state}
    (mu0.south west);
 
 \path (l1.north west)
     edge[decorate, decoration={brace, raise=0.7}]
     node[above, yshift=1.7] {state}
 (mu1.north east);
 \path
    (mun.south east)
    edge[decorate, decoration={brace, raise=0.7}]
    node[below, yshift=-1.7, anchor=north west,pos=1] {Last state (\textbf{P})}
    (ln.south west);
 \path ($(mun.east)+(0.2,0)$) edge[thick,->]
    node[pos=1, anchor=west] {Distribution $\delta$ of actions}
    node[pos=1, anchor=west, yshift=-12] {
        (\textbf{D} if $\delta[\alpha] = 1$)
    }
     +(1,0)
    ;
 \path (mun.south east)
    edge[decorate, decoration={brace, raise=15.7}]
    node[below, yshift=-16.7, anchor=north west, pos=1] {
        History in $S^+=(L\cdot M^*)^+$, \textbf{FM} if finitely many
        regular sets.
   }
 (l0.south west);

 \node[left of=l0] {\huge $\sigma:$};
\end{tikzpicture}
    \caption{Summary of strategy classes.}
    \label{fig:strategies}
\end{center}
\end{figure}

We provide the following regular encoding of strategies, which allows for a more precise description of classes of strategies and enables their finite representation:
\begin{definition}
    \label{def:fmem}
    Let $\sigma_i\in\strat_i$ be a strategy for player~$i$.
    $\sigma_i$ is \defn{positional} (P) if it depends only on the last
    state (location and channel):
    $\forall h,h'\in S^*,\forall s\in S,\sigma_i(hs)=\sigma_i(h's)$.

    $\sigma_i$ is \defn{finite memory} (FM)
    if
    \begin{itemize}
        \item The set $\{ \sigma_i(h) ~|~h\in S^+ \}\subseteq \dist(\Act_i)$
            is finite;
        \item For every $\delta\in \dist(\Act_i)$,
            the set of histories $\sigma^{-1}(\delta)\subseteq (L\cdot M^*)^*$
            is regular.
    \end{itemize}
\end{definition}

\cref{fig:strategies} illustrates how
a FM strategy can be represented finitely. It determines which
action to play by running a finite number of automata
on the history, reading every location and channel configuration.
In contrast, a strategy that depends only on the last state may not
be representable in the infinite state case, as infinitely many different
decisions might be taken. Therefore, we avoid using the usually synonymous term
\defn{memoryless} and instead refer to these strategies as \defn{positional}.
A convenient representation is therefore the
positional
\emph{and} finite memory (PFM) strategies, meaning that the action
depends only on the last state, which must belong to one of finitely many
regular sets.

\subsection{Positive Reachability}
\label{ssec:posreach}
In the positive reachability case, player~$i$ tries to enforce that \emph{some}
finite prefix reaches the target set $R$. This can be achieved by a
backward-reachability algorithm instantiated on well-quasi-ordered
sets~\cite{ACJT96}:


\begin{restatable}{lemma}{lemreachalgo}
    \label{lem:reachalgo}
    Given a regular set $R\subseteq S^*$,
    the algorithm that computes
    $\bigcup_{k\geq 0} \pre^k_i(R)$
    converges in a finite number of steps and returns
    a set $R\cup V$, where $V$ is upward-closed and
    $
        R\cup V = 
            \sem{ \NZ( \F R) }_i
    $.
    Moreover, PFM winning strategies are sufficient
    for both players.
\end{restatable}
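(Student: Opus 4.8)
The plan is to analyze the standard backward-reachability fixpoint iteration $X_0=R$ and $X_{n+1}=X_n\cup\pre_i(X_n)$, whose union is $\bigcup_{k\ge0}\pre_i^k(R)$. First I would record that every iterate is regular and effectively computable: $X_0=R$ is regular by assumption, and if $X_n$ is regular then $\pre_i(X_n)$ is computable and regular by \cref{prop:regclcg}(2), so $X_{n+1}$ is regular as a finite union. Writing $V_n=\bigcup_{j<n}\pre_i(X_j)$, each term is upward-closed for $\cleq$ by \cref{prop:regclcg}(1), hence $V_n$ is an increasing sequence of upward-closed sets and $X_n=R\cup V_n$. Since $\cleq$ is a wqo, any increasing chain of upward-closed sets stabilizes after finitely many steps; thus there is $N$ with $V_N=V_{N+1}=:V$, giving $X_N=R\cup V=X_{N+1}$, a fixpoint. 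The algorithm can detect this by testing equality of the regular sets $X_{n+1}$ and $X_n$ (decidable), so it converges in finitely many steps and returns $R\cup V$ with $V$ upward-closed.

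For soundness ($R\cup V\subseteq\sem{\NZ(\F R)}_i$) I would use a rank decomposition: assign rank $0$ to states of $R$ and rank $k$ to the states of $X_k\setminus X_{k-1}$; since convergence happens at step $N$, there are finitely many ranks. The key one-step observation is that $s\in\pre_i(B)$ holds iff for every opponent action profile $\vec\alpha_{-i}$ there is some $\alpha_i\in\Act_i(s)$ reaching $B$ with positive probability; consequently, player~$i$ playing the uniform distribution over $\Act_i(s)$ guarantees, from any state of rank $k\ge1$, a positive-probability one-step move to rank $<k$. Composing these moves along a play yields $\Pr^{\vec\sigma'}_s(\F R)>0$ against every $\vec\sigma'$. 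This strategy is positional and, since $\Act_i(s)$ depends only on the location and the last channel letter (as noted in the proof of \cref{prop:regclcg}(1)), it takes finitely many values over finitely many regular sets, hence is PFM.

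For completeness and determinacy ($\sem{\NZ(\F R)}_i\subseteq R\cup V$) I would dualize the same one-step characterization. Since $X_N=R\cup V$ is a fixpoint, $\pre_i(R\cup V)\subseteq R\cup V$, so for every $t\notin R\cup V$ the negation of the characterization gives a \emph{pure} opponent profile $\vec\alpha_{-i}^{\,*}$ such that no action of player~$i$ leaves $S\setminus(R\cup V)$ with positive probability. Playing such a profile positionally keeps every play started outside $R\cup V$ inside $S\setminus(R\cup V)\subseteq S\setminus R$ with probability $1$, whence $\Pr(\F R)=0$ and $t\notin\sem{\NZ(\F R)}_i$. Grouping the states of $S\setminus(R\cup V)$ by a safe opponent profile (finitely many profiles, each inducing a regular, computable set) yields a pure PFM opponent strategy, establishing PFM sufficiency for both players.

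The main obstacle, I expect, is not the fixpoint correctness itself --- which adapts the concurrent finite-state argument of \cite{AHK07} --- but reconciling it with the infinite, unbounded channel: one must ensure that the rank classes and the safe-profile classes are genuinely regular and finitely many, so that the witnessing strategies are PFM, which is exactly where the regularity and wqo properties of \cref{prop:regclcg} are essential.
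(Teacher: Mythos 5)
Your proof is correct and follows essentially the same route as the paper's: termination via the decomposition $X_k=R\cup V_k$ with $V_k$ upward-closed and the wqo property, a positional winning strategy for player~$i$ obtained from the one-step characterization of $\pre_i$ (the paper picks a suitable distribution where you pick the uniform one), and a pure positional spoiling action for the opponent extracted from the fixpoint equation $\pre_i(R\cup V)\subseteq R\cup V$. Your rank decomposition and the explicit regularity bookkeeping for PFM-ness are just more detailed renderings of steps the paper leaves implicit.
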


\begin{example}
    \label{ex:posreach}
    Consider the example from \Cref{fig:example2}. A winning strategy for
    player 2/Attacker to achieve the objective $\NZ(\F l_2\cdot c)$ (eventually
    reaching state $l_2\cdot c$ with positive probability) consists of
    playing $\uniform(\{a,b\})$ from any state in $l_0\cdot M^*$,
    then $b$ from any state in $l_1\cdot M^*\cdot c^2$.
    With positive probability, the
    state $l_1\cdot cc$ is reached after $3$ steps, regardless of Sender's
    strategy, and then $l_2\cdot c$ is reached.
\end{example}

This algorithm can also be used to compute almost-sure safety of $R$
by applying the determinacy result:
$\sem{\AS(\G R)}_i = \overline{\sem{\NZ(\F \overline{R})}_{-i}}$.
As shown in~\cref{lem:reachalgo}, the optimal strategy is PFM, and in the safety case, it can be further
restricted without compromising the safety property. More precisely,
we define the most general action restriction for player~$i$ that preserves
safety as follows:
\begin{definition}
    \label{def:stay}
For any set $R\subseteq S$,
$\stay_i^\calG(R)$ is the mapping that restricts the allowed actions
on $R$ to stay in $R$:
\[
    s\in R\mapsto
              \{\alpha\in \Act_i(s)~|~
                \forall \vec{\sigma},
                \Pr^{(\vec{\sigma}_{-i},\alpha)}(\X R)=1
              \}
\]
\end{definition}

\subsection{Almost-Sure Reachability}
\label{ssec:asrreach}
In contrast to the turn-based case in~\cite{AHAMS08},
concurrent actions require a careful analysis of the allowed actions.
The intuition is as follows: As the player with reachability objective tries to avoid ``being trapped'' in bad states,
he may choose not to play some actions
based on the previously defined $\stay$ operator (\cref{def:stay}).
This limits the available actions and gives more power to his opponent, which then reduces the winning region.
This idea was proposed in the algorithm for almost-sure reachability in finite
state games described in \cite{AHK07}.
We follow here this approach by implementing
the algorithm symbolically as depicted in \cref{alg:asreach} which boils down to
providing effective procedures for $\pre_i$ and $\stay_i$, when the input
is a regular set of states.
As the algorithm in \cite{AHK07} actually solves almost-sure \textit{repeated} reachability, the authors
assumed that the target set $R$ is \defn{absorbing},
meaning that once a state in $R$ is reached, the game cannot exit $R$, regardless
of the players' actions. This assumption is also made here as adjusting
the $\pre_i$ operator is sufficient to guarantee this property in our setting.
Termination of the algorithm can be shown using the wqo property,
similar to that described in~\cite{AHAMS08}, resulting in the following lemma:

\begin{restatable}[t]{algorithm}{asreach}
    \caption{Almost-Sure Reachability}
    \label{alg:asreach}
    \textbf{Input}: An arena $\calA$, $i\in\Agt$ and a regular set $R$\\
    \textbf{Output}: $D_k=\sem{\AS(\G \F R)}_i$
    \begin{algorithmic}
        \STATE $k\gets 0;\quad D_0 \gets S; \quad \calA_0 \gets \calA$
        \REPEAT
            \STATE $C_{k} =
                \sem{ \AS(\G(D_k\backslash R))}_{-i}(\calA_k)$
            \STATE $Y_k = \sem{\NZ( \F R)}_i(\calA_k)$
            \STATE $D_{k+1} =
                \sem{\AS(\G(Y_k))}_i(\calA_k)
                $
            \STATE $\calA_{k+1} \gets~\calA_{k}$ where
                $\Act_i^{\calA_{k+1}}=\stay_i^{\calA_k}(D_{k+1})$
        \UNTIL{$D_k = D_{k+1}$}
    \end{algorithmic}
\end{restatable}

\begin{restatable}{lemma}{lemterminationas}
    \label{lem:terminationas}
     The almost-sure reachability algorithm of \cite{AHK07}, instantiated
     symbolically on 2.5-player \modelname{} in \cref{alg:asreach}, terminates and returns
     a downward-closed set.
\end{restatable}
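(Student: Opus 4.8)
The plan is to prove the two assertions separately, and to carry the shape claim first since it feeds the termination argument. I would show by induction on $k$ that every set $D_k$ produced by \cref{alg:asreach} is downward-closed for the wqo $\cleq$ of \cref{prop:regclcg}. The base case $D_0 = S$ is immediate. For the step, observe that $D_{k+1} = \sem{\AS(\G Y_k)}_i(\calA_k)$ is a \emph{safety} region, so by the determinacy identity already used in this section, $\sem{\AS(\G Y_k)}_i = \overline{\sem{\NZ(\F \overline{Y_k})}_{-i}}$. By \cref{prop:regclcg} (equivalently by \cref{lem:reachalgo}) the positive-reachability region $\sem{\NZ(\F \overline{Y_k})}_{-i}$ is upward-closed and regular, hence its complement $D_{k+1}$ is downward-closed and regular. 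I would stress here that $\cleq$ is one fixed wqo for all $k$, since it depends only on the state space $S=L\cdot M^*$ and not on the action restriction used to build $\calA_k$. Since the loop returns the stabilized $D_k$, this already establishes that the output is downward-closed.

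For termination, I would use that the outer loop produces a \emph{decreasing} chain $D_0 \supseteq D_1 \supseteq \cdots$ of downward-closed sets and that such chains stabilize in a wqo. The decreasing property is the monotonicity invariant of the algorithm of \cite{AHK07}: each iteration passes from $\calA_k$ to $\calA_{k+1}$ by restricting player~$i$'s allowed actions to $\stay_i^{\calA_k}(D_{k+1})$, and removing actions of the reacher can only shrink the positive-reachability region and hence the derived safety region, yielding $D_{k+1}\subseteq Y_k \subseteq D_k$. I would recover this inclusion by re-reading the correctness argument of \cite{AHK07} symbolically rather than reproving it from scratch.

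I would then invoke the standard wqo consequence that the upward-closed subsets of $(S,\cleq)$ satisfy the ascending chain condition: given a strictly ascending chain $U_0 \subsetneq U_1 \subsetneq \cdots$, pick witnesses $x_k \in U_k \setminus U_{k-1}$; the wqo property furnishes $i<j$ with $x_i \cleq x_j$, and since $x_i \in U_i \subseteq U_{j-1}$ with $U_{j-1}$ upward-closed we get $x_j \in U_{j-1}$, contradicting $x_j \notin U_{j-1}$. Taking complements, the downward-closed sets satisfy the descending chain condition, so the decreasing chain $(D_k)$ must stabilize: $D_k = D_{k+1}$ for some $k$, and the loop exits, which is exactly the termination condition of \cref{alg:asreach}.

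The main obstacle I anticipate is the monotonicity step $D_{k+1}\subseteq D_k$ in the \emph{modified} arenas $\calA_k$, which is the crux imported from \cite{AHK07} and must be checked to survive the symbolic, infinite-state reformulation via $\stay_i$. A secondary point requiring care, though routine given the earlier results, is effectiveness and regularity closure of each inner computation ($C_k$, $Y_k$, $D_{k+1}$, and the $\stay_i$ restriction), so that the outer loop is well-defined over regular sets; this is supplied by the computability of $\pre_i$ on regular inputs in \cref{prop:regclcg} and the termination and closed form of $\sem{\NZ(\F R)}_i$ in \cref{lem:reachalgo}.
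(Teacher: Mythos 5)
Your overall architecture matches the paper's: show by induction that each $D_k$ is downward-closed (equivalently, $\overline{D_k}$ is upward-closed), then derive termination from the ascending chain condition on upward-closed subsets of a wqo. The termination half and the chain-condition argument are fine. The induction step, however, has a genuine gap. You justify downward-closedness of $D_{k+1}$ by asserting that the positive-reachability region $\sem{\NZ(\F \overline{Y_k})}_{-i}$ is upward-closed ``by \cref{prop:regclcg} (equivalently by \cref{lem:reachalgo})''. That is not what those results give: \cref{lem:reachalgo} says the region has the form $T\cup V$ where $T$ is the \emph{target} and only $V$ (the accumulated $\pre$-images) is upward-closed; the target itself need not be. Here the relevant target is $C_k = D_k\setminus Y_k = \sem{\AS(\G(D_k\setminus R))}_{-i}(\calA_k)$, and since $Y_k=\sem{\NZ(\F R)}_i(\calA_k)$ is itself of the mixed form $R\cup V_k$ with $R$ arbitrary regular, its complement is in general neither upward- nor downward-closed. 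So the step ``upward-closed region, hence downward-closed complement'' does not go through as written.

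The paper closes exactly this hole with a separate, direct game argument for states in the target part: if $s\in C_k$ and $s\cleq t$, then player $-i$ replays from $t$ the action witnessing $s\in\sem{\AS(\G(D_k\setminus R))}_{-i}(\calA_k)$ (legal because $s\cleq t$ forces the same last channel letter, hence the same allowed actions), and with positive probability the surplus messages of $t$ are lost so that the play reaches a state of $C_k$; hence $t\in\sem{\NZ(\F C_k)}_{-i}(\calA_k)\subseteq\overline{D_{k+1}}$. Note this only establishes $t\notin D_{k+1}$, not $t\in C_k$, which is all that is needed. A second point you wave away too quickly: the upward-closed component $V$ produced by \cref{lem:reachalgo} inside $\calA_k$ is upward-closed only with respect to the order restricted to the states of $\calA_k$, i.e.\ to $D_k$; the induction step therefore also needs the case split on whether $t\in D_k$ (use upward-closedness of $V$ in $D_k$) or $t\notin D_k$ (then $t\in\overline{D_k}\subseteq\overline{D_{k+1}}$ by the induction hypothesis and the decreasing property). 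With these two repairs your argument coincides with the paper's.
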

To prove the correctness of \cref{alg:asreach},
we must provide a winning
strategy from every state in the computed set $ D_k $, and a strategy for the opponent from every state in the \textit{complement} set $ \overline{D_k} $.
At this point,
it is important to note that positional strategies may not be sufficient, especially
for the opponent player. As illustrated by the \textit{Hide-or-Run game}
presented in \cite{AHK07}, \textit{counting strategies} may be needed.
We generalise the notion of counting strategies to the infinite state case as follows:

\begin{definition}
    \label{def:counting}
    For any $k$, let $p_k=2^{-1/(2^{k})}$.
    A strategy $\sigma_i\in\strat_i$ is \defn{counting} (C) if
    there exist
    two PFM strategies $\sigma_i^v,\sigma_i^u\in\strat_i$
    such that
    for every $k\in\bbN$, $h\cdot s\in S^k$ and any $\alpha \in \Act_i(s)$,
            \[
                \sigma_i(\alpha~|~h\cdot s) =
                p_k \sigma^u_i(\alpha~|~h\cdot s) +
                (1-p_k) \sigma^v_i(\alpha~|~h\cdot s)
            \]
\end{definition}
Note that $p_k$ is a sequence of reals between $0$ and $1$, such
that the infinite product
$\prod_{i=1}^{\infty} p_i=p$ is between $0$ and $1$. This means that the strategy
$\sigma_i^u$ always has some positive probability of being played, but overall
cannot be played forever.
Since the sequence $(p_k)_k$ is fixed, a counting strategy requires infinite memory but can be
finitely represented by two PFM strategies.

\begin{example}
    \label{ex:countreach}
    Consider again the example shown in \Cref{fig:example2}. We provide two
    examples of winning strategies in the $\AS$ case:
    \begin{itemize}
        \item If $\Phi_1=\AS(\F (l_1\cdot \{a\}^3\cup l_2\cdot M^*))$
            --namely Sender can almost-surely
            eventually force a valid transmission of
            $3$ consecutive $a$'s, assuming that the game continues
            forever--
            Sender has a winning
            strategy by playing $\uniform(\{a,b\})$ from all states. For any
            strategy $\sigma_2$ of Attacker, either the game eventually
            reaches $l_2$, or there is a state $s=l\cdot \mu\in \{l_0,l_1\}\cdot M^*$
            visited infinitely often. From this state, there is a fixed
            probability $p>0$ to produce three consecutive $a$'s and that all
            $b$'s are dropped, so this event eventually happens almost-surely.
        \item If $\Phi_1=\AS(\F L\cdot M^*\{a\}M^*)$ --namely a
            message $a$ is eventually sent almost-surely-- we argue that
            Sender cannot achieve her objective: To observe this, one can exhibit a winning
            strategy for Attacker, whose objective is then
            $\Phi_2=\NZ(\G L\cdot \{b,c\}^*)$. Such a strategy consists
            of the counting strategy
            playing action $w$ with probability $p_k$ at round $k$
            and $\uniform(\{a,b\})$ otherwise. At any round still in $l_0$,
            Sender cannot
            risk playing $a$ since there is a small probability for Attacker
            to scramble the communication and then reach $l_2$.
            The overall probability to stay in $l_0\cdot b^*$ is therefore $\prod_{k}p_k > 0$.

            On the other hand, any PFM strategy by Attacker can be defeated, which
            proves that counting strategies are required:
            by playing $b$ from all states in $l_0\cdot b^*$,
            Sender ensures that she will never lose:
            either $\sigma_A$ eventually plays $w$ with probability $1$,
            and Sender can then play $a$ and win, or there is a fixed
            positive probability (FM) that the game moves to $l_1\cdot b^*$, which
            happens almost-surely.
    \end{itemize}
\end{example}

This allows us to conclude on the almost-sure reachability case by adapting
the correctness proof of the almost-sure reachability algorithm in~\cite{AHK07} mentioned in \cref{lem:terminationas}.
As a matter of fact, the finite
attractor property seen in \cref{prop:regclcg} allows us to refer back to the
finite state case and derive sufficient strategies for both players. We
conclude this subsection with the following:
\begin{lemma}
  \label{lem:ascorrectness}
    Using \cref{alg:asreach}, one can compute:
    \begin{itemize}
        \item The winning set $W=\sem{\AS(\F R)}_i$;
    \item A PFM winning strategy for $i$,
         $\sigma_i: h\cdot s\in S^*\cdot W\mapsto \uniform(\stay_i(W))$; and
    \item A
        counting (C) winning strategy for his opponent $-i$ (objective $\NZ(\G \overline{R})$).
   \end{itemize}
\end{lemma}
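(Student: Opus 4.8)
The plan is to prove \Cref{lem:ascorrectness} by leveraging the finite attractor property together with the structure produced by \Cref{alg:asreach}, reducing the correctness argument to the finite-state case already handled in~\cite{AHK07}. First I would establish that the set $D_k$ returned by the algorithm (by \Cref{lem:terminationas} this is a downward-closed, hence regular, fixpoint) is exactly the winning region $W = \sem{\AS(\F R)}_i$. The $\subseteq$-safe direction is built into the construction: at the fixpoint we have $D = D_{k+1} = \sem{\AS(\G Y_k)}_i(\calA_k)$ where $Y_k = \sem{\NZ(\F R)}_i(\calA_k)$, so player~$i$ can remain inside $Y_k$ almost surely while retaining positive-probability reachability of $R$ at every step under the restricted arena $\calA_k$ whose actions are $\stay_i^{\calA_{k-1}}(D_k)$. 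The purpose of the $\stay$-restriction is precisely to forbid $i$ from ever leaving $D$, so from any $s \in D$ the proposed strategy $\sigma_i : h\cdot s \mapsto \uniform(\stay_i(W))$ keeps the play inside $W$ forever.

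Next I would show this positional uniform strategy is almost-surely winning. The key is the finite attractor $A$ from \Cref{prop:regclcg}(3): under any profile, $\Pr^{\vec\sigma}(\G\F A) = 1$, so the play returns infinitely often to $A \cap W$, a \emph{finite} set. On this finite set, the situation is exactly that of a finite-state concurrent game with almost-sure reachability, where~\cite{AHK07} guarantees that from each attractor state there is a uniformly positive probability $p > 0$ of reaching $R$ within a bounded number of steps no matter what the opponent does (this is what $Y_k = \sem{\NZ(\F R)}_i$ encodes, and the $\stay$-restriction ensures staying inside). A standard Borel--Cantelli / second-moment argument then upgrades ``infinitely many independent-ish attempts, each succeeding with probability $\geq p$'' to ``$R$ is reached almost surely,'' establishing $\Pr^{(\vec\sigma'_{-i},\sigma_i)}_s(\F R) = 1$ for every opponent response, hence $s \in W$ and the stated PFM winning strategy is correct.

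For the opponent $-i$ (objective $\NZ(\G \overline{R})$ from every state of $\overline{W} = \overline{D_k}$), I would extract the counting strategy of \Cref{def:counting}. As noted in the discussion preceding \Cref{def:counting} and illustrated by \Cref{ex:countreach}, positional strategies are insufficient because of Hide-or-Run phenomena: the opponent must, with the schedule $p_k$, occasionally mix in the ``escape'' PFM strategy $\sigma^u_{-i}$ while predominantly following the ``safe'' PFM strategy $\sigma^v_{-i}$ derived from the complement regions $\overline{D_{k+1}}$ and $C_k$ computed by the algorithm. Because $\prod_k p_k = p \in (0,1)$, the profile retains a strictly positive probability of never entering $R$, yielding $\Pr(\G\overline{R}) > 0$ as required. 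Here again the finite attractor lets us import the finite-state counting-strategy construction of~\cite{AHK07} onto $A \cap \overline{W}$ and then propagate the guarantee to all of $\overline{W}$ via the backward structure of the $C_k, Y_k, D_{k+1}$ iteration.

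The main obstacle I expect is the second-moment/independence bookkeeping in the almost-sure upgrade: although the finite attractor guarantees infinitely many visits to $A$, the successive reachability attempts from attractor states are not literally independent, and the bound $p$ must be shown uniform over the (a priori infinitely many) attractor-reachable configurations. The cleanest route is to collapse all attractor states into the finite quotient on which~\cite{AHK07} applies directly, verify that the $\stay$-restricted arena $\calA_k$ preserves both the $\NZ(\F R)$ guarantee and the absorbing-target assumption on $R$, and then invoke the finite-state correctness theorem verbatim; the lossy-channel-specific work is confined to checking that $\pre_i$, $\stay_i$, and the winning regions remain regular and effectively computable, which is already delivered by \Cref{prop:regclcg} and \Cref{lem:reachalgo}.
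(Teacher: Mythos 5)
Your proposal follows essentially the same route as the paper: the fixpoint of \cref{alg:asreach} is shown to equal $\sem{\AS(\F R)}_i$ by using the finite attractor of \cref{prop:regclcg} to get a uniform bound $\epsilon>0$ on the probability of reaching $R$ from attractor states under the $\stay$-restricted uniform strategy, while a counting strategy (a trapping action played with probability $p_n$, mixed with a uniform punishing component, chosen according to which instruction of the algorithm removed the state) spoils reachability from the complement. The independence obstacle you flag is resolved in the paper not by Borel--Cantelli but by the standard value-function induction $1-v(s)\le(1-\epsilon)^n$ via the Markov property, which sidesteps the bookkeeping entirely.
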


\section{Conjunction of Objectives}
\label{sec:conjobj}
In this section, we stay in the 2.5-player zero-sum setting, and address the
computation of winning regions for objectives
composed as a \textit{conjunction} of qualitative objectives.
%
Note that contrary to the previous section, games are in general
not determined for such conjunctive objectives~\cite{WW21}\todo{cite others?},
so we focus now on the winning strategies for the player whose objective
is a conjunction of objectives.

As discussed in \cref{sec:zerosum}, positive reachability
and almost-sure safety/reachability can be achieved using PFM strategies, while
positive safety may require the use of counting strategies.
The rest of this section is dedicated to proving the following theorem by
combining PFM and counting strategy classes.
\begin{theorem}
    \label{thm:conjunctions}
    Let $\Phi$ be a conjunction of $\NZ$ and $\AS$ objectives
    for safety and reachability path specifications.
    Then the winning region $\sem{\Phi}_i(\calG)$ is computable.
\end{theorem}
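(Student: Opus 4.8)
The plan is to reduce a conjunction $\Phi = \bigwedge_{j=1}^k \Psi_j$ of qualitative objectives (each $\Psi_j$ being $\AS$ or $\NZ$ of a reachability or safety specification over a regular set) to a finite combination of the single-objective algorithms already established in \cref{sec:zerosum}. The key structural observation I would exploit is the \emph{finite attractor} property (\cref{prop:regclcg}, item~3): under \emph{any} strategy profile, the finite set $A = L\cdot\epsilon$ is visited infinitely often almost surely. This means that the long-run qualitative behaviour of a play is governed by which states of $A$ are visited infinitely often, which in turn lets me decompose the conjunction into a reachability phase (reaching a "good region" where all objectives can be simultaneously maintained) followed by a sustained phase where the player plays to keep the play inside that region.

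First I would classify the conjuncts. The $\AS(\G R_j)$ (almost-sure safety) conjuncts jointly require the play to almost-surely stay inside $R^{\cap} = \bigcap_j R_j$; by \cref{lem:reachalgo} and the $\stay_i$ operator (\cref{def:stay}) this is handled by intersecting the safe regions and restricting the arena to the most general safety-preserving action restriction $\stay_i^\calG(R^\cap)$, which preserves all $\AS$-safety conjuncts simultaneously since safety is closed under intersection. The $\NZ(\F R_j)$ (positive reachability) conjuncts are the most delicate when taken together, because positive probability of reaching several regions is \emph{not} the same as positive probability of reaching them in one run; however, combined with the attractor property and the fact that $\NZ(\F R_j)$ for absorbing $R_j$ reduces to $\NZ(\G\F R_j)$, I can treat repeated-reachability conjuncts via a round-robin scheduling argument. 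The $\AS(\F R_j)$ conjuncts are computed by \cref{lem:ascorrectness} (\cref{alg:asreach}), and the $\NZ(\G R_j)$ (positive safety) conjuncts are exactly the case requiring \emph{counting} strategies (\cref{def:counting}), as illustrated in \cref{ex:countreach}.

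The core of the argument is to build a single winning strategy by \emph{combining} a PFM strategy (sufficient for $\AS$-objectives and positive reachability by \cref{lem:reachalgo,lem:ascorrectness}) with a counting strategy (needed for positive safety). Concretely, I would use the counting-strategy template: write the desired strategy as $\sigma_i = p_k\,\sigma_i^u + (1-p_k)\,\sigma_i^v$ where $\sigma_i^v$ is the PFM strategy securing the almost-sure and positive-reachability conjuncts, and $\sigma_i^u$ is the PFM action-restriction securing positive safety. Because $\prod_k p_k = p \in (0,1)$, the $\sigma_i^u$ component is played with overall positive probability (guaranteeing positive-safety conjuncts survive with probability at least $p$) yet cannot be played forever (so it does not destroy the almost-sure reachability conjuncts, which still fire on the $\sigma_i^v$ branch infinitely often thanks to the finite attractor). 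The winning region is then obtained as an intersection of the per-conjunct winning regions computed by the algorithms of the previous section, each of which is regular (upward- or downward-closed by \cref{prop:regclcg}); regularity is preserved under finite intersection, so $\sem{\Phi}_i(\calG)$ is effectively regular and hence computable.

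The main obstacle I anticipate is reconciling the positive-safety conjuncts with the almost-sure conjuncts in a \emph{single} play, since the counting device buys positive probability for $\NZ(\G R_j)$ at the cost of occasionally playing "risky" safety-violating actions, which could in principle conflict with the $\stay_i$ restrictions demanded by the $\AS(\G R)$ conjuncts and with the progress needed for $\AS(\F R)$ conjuncts. I would resolve this by first restricting the arena to $\stay_i^\calG$ of the intersection of all almost-sure-safe regions (so every action available under both $\sigma_i^u$ and $\sigma_i^v$ already respects almost-sure safety), and then verifying that on the restricted arena the counting combination still secures each positive and almost-sure reachability conjunct — the former because $\sigma_i^u$ fires with positive probability, the latter because the attractor forces $\sigma_i^v$'s reachability guarantee to trigger almost surely on the complementary branch. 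Proving that these guarantees genuinely hold \emph{simultaneously} on one probability space, rather than separately, is where the finite-attractor reduction to the finite-state analysis of \cite{AHK07} does the real work.
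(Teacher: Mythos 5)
Your decomposition of the conjuncts and your handling of the $\AS(\G\cdot)$, $\AS(\F\cdot)$ and $\NZ(\F\cdot)$ parts (restrict the arena via $\stay_i$ of the almost-sure regions, then intersect the positive-reachability regions and play uniformly) matches the paper's \cref{lem:nonz}. The gap is in your treatment of the positive-safety conjuncts. You propose a \emph{per-round} convex combination $\sigma_i = p_k\,\sigma_i^u + (1-p_k)\,\sigma_i^v$ mixing the safety-securing component with the component securing the almost-sure conjuncts. This fails: by construction of the counting weights, the event that the $\sigma_i^u$ component is selected at \emph{every} round has probability $\prod_k p_k = p > 0$, and on that event the play may never reach the target $R_1$ of an $\AS(\F R_1)$ conjunct (indeed $\sigma_i^u$ is designed to confine the play to $\overline{R_2}$, which in general avoids $R_1$). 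Hence $\Pr(\F R_1) \le 1-p < 1$ and the almost-sure reachability conjunct is violated. Your closing claim that the reachability guarantee "still fires on the $\sigma_i^v$ branch infinitely often" is exactly where this breaks down: the whole point of the counting device is that one branch \emph{is} followed forever with positive probability, so you cannot simultaneously rely on the other branch firing almost surely. Relatedly, the winning region of the conjunction is not the intersection of per-conjunct winning regions once a positive-safety conjunct is present; this is the non-convex case.

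The paper resolves this differently and you need its key lemma. By \cref{lem:nzas}, any strategy winning for $\AS(\F R_1)\wedge\NZ(\G\overline{R_2})$ must give positive probability to reaching $R_1\cap\sem{\NZ(\G\overline{R_2})}_i$ on a single run. \cref{lem:nz} then \emph{eliminates} each $\NZ(\G\overline{R})$ conjunct by replacing it with the positive-reachability conjunct $\NZ(\F R')$ where $R' = \sem{\NZ(\G\overline{R})}_i\cap\bigcap_{\AS(\F R'')\in\Phi}R''$. The winning strategy is a \emph{sequential} (switch-once) composition, not a per-round mixture: play the PFM strategy for the almost-sure and positive-reachability conjuncts until, on some branch of positive probability, a state of $R'$ is reached with all $\AS(\F\cdot)$ targets already met, and only then switch to the counting strategy for $\NZ(\G\overline{R})$, which from that point on no longer conflicts with the (already satisfied, absorbing) reachability targets. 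After this elimination the remaining conjunction contains no positive safety, and \cref{lem:nonz} applies. Your round-robin treatment of multiple $\NZ(\F\cdot)$ conjuncts is also unnecessary: uniform randomization over the $\stay_i$-restricted actions gives each of them positive probability separately, since they need not be achieved on the same run.
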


We first notice that conjunction of objectives usually
enjoys some convexity properties, namely randomization between individual
optimal strategies is sometimes sufficient to achieve the conjunctive
objective.
This is the case when the conjunction $\Phi$ does not contain a positive
safety objective. Following this observation, we get:
\begin{lemma}
    \label{lem:nonz}
    If $\Phi$ is a conjunctive objective containing
    only $\NZ(\F \cdot),\AS(\G \cdot)$ and $\AS(\F \cdot)$
    objectives of regular sets,
    then $\sem{\Phi}_i(\calG)$
    is computable, and 
    there exists a winning PFM strategy for player~$i$.
\end{lemma}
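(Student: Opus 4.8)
The plan is to prove \cref{lem:nonz} by establishing a convexity/randomization principle: since the conjunctive objective $\Phi$ contains no positive safety objective, every individual conjunct is of a type for which \cref{lem:ascorrectness,lem:reachalgo} already supply PFM winning strategies, and I claim that a single PFM strategy obtained by uniform randomization over the individual optimal strategies simultaneously wins all conjuncts from the common winning region. First I would reduce the problem to understanding where $\Phi$ is winnable. The natural candidate for the winning region is the intersection $W = \bigcap_j \sem{\Psi_j}_i(\calG)$ of the winning regions of the individual conjuncts $\Psi_j$, each of which is computable (by \cref{lem:reachalgo} for $\NZ(\F\cdot)$, and by \cref{lem:ascorrectness} and the determinacy-based safety computation for $\AS(\G\cdot)$ and $\AS(\F\cdot)$). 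Computability of $W$ is then immediate, since each region is regular and regular sets are effectively closed under finite intersection. The entire content of the lemma therefore lies in showing that $W$ is exactly $\sem{\Phi}_i(\calG)$ and that a single PFM strategy witnesses membership.

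The key step is the construction of the combined strategy. From any state $s\in W$, player~$i$ has PFM winning strategies $\sigma^{(1)},\dots,\sigma^{(m)}$ for the respective conjuncts $\Psi_1,\dots,\Psi_m$, each of the uniform-support form described in \cref{lem:ascorrectness} (playing $\uniform(\stay_i(\cdot))$ on the relevant region) or the backward-reachability form of \cref{lem:reachalgo}. I would define $\sigma_i$ on histories ending in $W$ to play the action distribution $\frac{1}{m}\sum_{j=1}^m \sigma^{(j)}_i(\cdot\mid h\cdot s)$, i.e. a uniform mixture over the finitely many component strategies. This mixture is itself PFM: its support is a finite union of regular preimages, and since each $\sigma^{(j)}$ depends only on the last state lying in finitely many regular sets, so does the mixture. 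The crucial observation is that for a $\NZ(\F\cdot)$ objective, playing $\sigma^{(j)}$ with the fixed positive probability $1/m$ at every step is enough to retain positive reachability probability, because positive-probability reachability is preserved under any strategy that assigns positive weight to the winning component; and for $\AS(\G\cdot)$ and $\AS(\F\cdot)$ objectives, the mixture must be shown to still put full measure on the good event. For almost-sure safety this holds because every component action used by any $\sigma^{(j)}$ remains within the relevant $\stay_i$-restricted region (one restricts all components to the common safe restriction on $W$), so no step ever leaves the safe set. For almost-sure reachability the argument leans on the finite attractor property of \cref{prop:regclcg}: under the mixed strategy the play visits the attractor infinitely often almost surely, and from each attractor visit there is a bounded-below positive probability of reaching $R$ within finitely many steps, which by a Borel--Cantelli / second-Borel--Cantelli-style argument forces $R$ almost surely.

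The main obstacle will be verifying that randomization genuinely preserves each conjunct \emph{simultaneously} rather than merely each one in isolation, and in particular that the almost-sure conjuncts are not broken by the dilution coming from the mixture. For $\NZ(\F\cdot)$ the preservation is robust and easy. The delicate case is an $\AS$ objective mixed with others: diluting an almost-surely-winning strategy by a factor $1/m$ could in principle destroy the full-measure guarantee if the other components steer the play away from the good region. I would handle this by exploiting the specific structure established earlier --- namely that on $W$ the $\AS$-safety strategies act by an action restriction ($\stay_i$) rather than by a concrete distribution, so I restrict \emph{all} components to the common safe action set on $W$ first, guaranteeing safety is never violated regardless of the mixture weights; and for $\AS(\F\cdot)$ I invoke the finite-attractor reduction to the finite-state setting (as in the conclusion of \cref{lem:ascorrectness}), where the convexity of almost-sure reachability under uniform randomization is the phenomenon already observed in \cite{AHK07}. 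Finally, for the converse inclusion $\sem{\Phi}_i(\calG)\subseteq W$, I would note that any strategy winning the conjunction necessarily wins each conjunct, so the winning region of $\Phi$ is contained in every $\sem{\Psi_j}_i(\calG)$ and hence in their intersection $W$, closing the proof.
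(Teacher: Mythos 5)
There is a genuine gap: your candidate winning region $W=\bigcap_j \sem{\Psi_j}_i(\calG)$ is in general strictly larger than $\sem{\Phi}_i(\calG)$, and the inclusion $W\subseteq\sem{\Phi}_i(\calG)$ --- the direction your whole strategy construction is meant to witness --- fails. The individual winning regions are computed with respect to strategies that may be mutually incompatible: a state can admit one strategy winning $\AS(\G R_1)$ and a different strategy winning $\AS(\F R_2)$, where every way of reaching $R_2$ almost surely forces leaving $R_1$ with positive probability. From such a state no single strategy wins the conjunction, yet the state lies in your $W$. Your uniform mixture does not rescue this: mixing a ``stay safe'' component with a ``go reach $R_2$'' component violates $\AS(\G R_1)$ outright. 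You partially sense the problem when you propose to ``restrict all components to the common safe action set first'', but once the $\stay_i$ restriction is imposed, the winning region for $\AS(\F R_2)$ must be recomputed \emph{inside the restricted game}, and it can shrink strictly below $\sem{\AS(\F R_2)}_i(\calG)$; you never propagate that back into the definition of $W$. Your closing argument only establishes $\sem{\Phi}_i(\calG)\subseteq W$, which is the easy and unproblematic direction.

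The paper's proof avoids this by a nested, order-dependent computation rather than a flat intersection: it first computes $W_s=\sem{\AS(\G R_s)}_i(\calG)$ for the intersection $R_s$ of all safety targets and restricts the arena to $W_s$ with actions $\stay_i^{\calG}(W_s)$; it then computes $W_r=\sem{\AS(\F R_r)}_i$ \emph{in the restricted game} $\calG^s$ (with reachability targets made absorbing, so that reaching the intersection $R_r$ captures reaching each target at possibly different times --- another point your mixture argument glosses over); it restricts again by $\stay_i^{\calG^s}(W_r)$; and only then intersects the $\NZ(\F\cdot)$ winning regions of the doubly restricted game. Each stage is computed in the arena already pruned by the previous stages, which is exactly what guarantees that the final uniform-randomization strategy wins all conjuncts simultaneously. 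To repair your proposal you would need to replace the flat intersection with this sequential restriction (or prove the two coincide, which they do not).
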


\begin{proof}[Sketch]
    For every objective $\Psi = \AS(\G R)$ or $\Psi = \AS(\F R)$ appearing in $\Phi$,
    we restrict the set of available actions to $\stay_i(\sem{\Psi}_i)$, as
    a winning strategy necessarily plays actions from this set only.
    States where no actions are available are removed.
    Finally, we solve the game on the restricted arena 
    for each $\NZ(\F\cdot)$ objective, and take the intersection
    of winning regions.
    A sufficient winning strategy consists in playing all actions uniformly at
    random.
\end{proof}

The treatment of positive safety objectives requires more attention, as winning
strategies may require infinite memory, as seen in~\cref{lem:ascorrectness}.
We notice however that any positive safety objective can be replaced by
a simple positive reachability property: it is sufficient for a
winning strategy to wait for all almost-sure reachability
objectives to be met, then play the counting strategy for $\NZ(\G\cdot)$.
It is therefore sufficient and necessary, for at least one state in $\sem{\NZ(\G \cdot)}_i$
to be reachable with positive probability.
This is summarized by the final lemma below:
\begin{lemma}[From $\NZ(\G)$ to $\NZ(\F)$]
    \label{lem:nz}
    If $\Phi\equiv\Theta\wedge \NZ(\G \overline{R})$, then
    $\sem{\Phi}_i = \sem{\Theta \wedge \NZ(\F R')}_i$
    where $ \Theta $ is some conjunctive condition (as defined in \cref{def:qualregions}) and
    \[
        R' =
            \sem{\NZ(\G \overline{R})}_i \cap
            \bigcap_{\AS(\F R'')\in \Phi} R''.
    \]
\end{lemma}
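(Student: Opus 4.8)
The plan is to prove the equivalence $\sem{\Phi}_i = \sem{\Theta \wedge \NZ(\F R')}_i$ by establishing both inclusions, exploiting the fact that $\NZ(\G \overline{R})$ is a positive safety objective that, once reached in the right region, can be sustained by a counting strategy while the almost-sure reachability components of $\Theta$ are satisfied en route.

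For the inclusion $\sem{\Phi}_i \subseteq \sem{\Theta \wedge \NZ(\F R')}_i$, I would start from a winning strategy $\sigma_i$ for the conjunction $\Phi = \Theta \wedge \NZ(\G\overline{R})$ from a state $s$. Since $\sigma_i$ wins $\NZ(\G\overline{R})$ against all opponent responses, there must be, with positive probability, a path that stays forever in $\overline{R}$; any such path witnesses positive-probability reachability of $\sem{\NZ(\G\overline{R})}_i$, because from the first point on the path where the remaining suffix stays in $\overline{R}$ the player is by definition in the positive-safety winning region. Simultaneously, $\sigma_i$ wins every $\AS(\F R'')$ objective in $\Theta$, so almost-surely each such $R''$ is visited; combining the positive-probability safe run with the almost-sure visits to each $R''$, and using that a finite intersection of an almost-sure family with a positive-probability event still has positive probability, I would argue that $R' = \sem{\NZ(\G\overline{R})}_i \cap \bigcap_{\AS(\F R'')\in\Phi} R''$ is reached with positive probability, hence $\NZ(\F R')$ holds; the remaining conjuncts of $\Theta$ are inherited directly.

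For the reverse inclusion, I would build an explicit winning strategy for $\Theta \wedge \NZ(\F R')$ that also wins $\Phi$. The key construction is a two-phase strategy: first play an optimal strategy for $\Theta \wedge \NZ(\F R')$ (which exists and is computable by \cref{lem:nonz}, as it contains only $\NZ(\F\cdot)$, $\AS(\G\cdot)$ and $\AS(\F\cdot)$ conjuncts after folding in the reachability objectives), waiting until a state in $R'$ is reached; then, upon reaching $R'\subseteq \sem{\NZ(\G\overline{R})}_i$, switch to the counting winning strategy for the positive-safety objective $\NZ(\G\overline{R})$ furnished by \cref{lem:ascorrectness}. Because $R'$ lies inside $\bigcap_{\AS(\F R'')} R''$, reaching $R'$ already discharges every almost-sure reachability conjunct, so no almost-sure obligation is jeopardised by switching phases; and because $R' \subseteq \sem{\NZ(\G\overline{R})}_i$, the counting strategy guarantees $\NZ(\G\overline{R})$ from the switch point with positive probability, which on the positive-probability event that $R'$ is reached yields overall positive probability for $\G\overline{R}$. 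Finally, the $\AS(\G\cdot)$ conjuncts of $\Theta$ must be preserved across both phases; I would verify this by ensuring both sub-strategies respect the relevant $\stay_i$ restrictions.

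The main obstacle I anticipate is the careful probabilistic bookkeeping in stitching the two phases together: one must show that conditioning on the positive-probability event of reaching $R'$ and then switching does not destroy the almost-sure safety conjuncts in $\Theta$, and that the positive probabilities of the reachability phase and the subsequent counting phase genuinely compose to a positive product rather than collapsing to zero. The subtlety is that $R'$ may be reached only with small probability and the counting strategy only sustains $\G\overline{R}$ with probability $\prod_k p_k$, so I would make precise that the product of two positive probabilities remains positive and that the almost-sure reachability targets $R''$ need not be revisited after the switch since $R'$ already lies in their intersection. I would also need to confirm that $R'$ is regular and computable, which follows since $\sem{\NZ(\G\overline{R})}_i$ is computable and each $R''$ is regular, so the finite intersection is again regular.
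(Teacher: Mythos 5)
Your reverse inclusion (the two-phase strategy: play a winning strategy for $\Theta\wedge\NZ(\F R')$ until $R'$ is hit, then switch to the counting strategy for $\NZ(\G\overline{R})$) is exactly the paper's argument, and your bookkeeping concerns --- that the almost-sure conjuncts survive the switch because $R'\subseteq\bigcap R''$ and the targets are absorbing, and that both phases respect the relevant $\stay_i$ restrictions --- are the right ones. The genuine gap is in your forward inclusion. You justify $\NZ(\F R')$ by saying that a play which stays forever in $\overline{R}$ ``witnesses'' reachability of $\sem{\NZ(\G \overline{R})}_i$ because from some point on the remaining suffix stays in $\overline{R}$. This conflates a property of one realized play, under one fixed opponent, with membership in the winning region, which quantifies over \emph{all} opponent strategies. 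A state visited by a play that happens to remain in $\overline{R}$ against the particular opponent at hand need not lie in $\sem{\NZ(\G \overline{R})}_i$: the opponent may simply have been cooperative on that branch. Consequently, intersecting the positive-probability event $\G\overline{R}$ with the almost-sure events $\F R''$ gives you a positive-probability set of plays that eventually sit in $\bigcap R''$ (using absorption to get simultaneity), but gives you no control over whether the state reached there belongs to $\sem{\NZ(\G \overline{R})}_i$ --- and that membership is the entire content of the first factor of $R'$.

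The paper closes this gap in its intermediary lemma (\cref{lem:nzas}) by a determinacy and opponent-switching argument: by determinacy of the positive-safety game, $\overline{\sem{\NZ(\G \overline{R})}_i}=\sem{\AS(\F R)}_{-i}$, so if against some opponent the play almost surely avoided $R'$, then almost surely upon entering the (absorbing) set $\bigcap R''$ the play would be inside the opponent's $\AS(\F R)$-winning region; an opponent who mimics the original one up to that hitting time and then switches to her $\AS(\F R)$-winning strategy would defeat $\NZ(\G\overline{R})$ outright, contradicting that $\sigma_i$ wins against all opponents. Your proof needs this (or an equivalent) argument; purely pathwise reasoning under a single opponent does not suffice. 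The remaining points of your write-up --- regularity and computability of $R'$, and that the product of the two positive probabilities in the second phase stays positive --- are fine.
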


\Cref{thm:conjunctions} is then obtained by applying \Cref{lem:nz} for every
$\NZ(\G \cdot)$ objective, then applying \Cref{lem:nonz}.

\section{Decidability of \ecore and \acore}
\label{sec:core}

In this section, we present our main decidability result. We begin by recalling that \ecore and \acore are undecidable when players' goals are given by almost-sure LTL formulae (\cref{prop:undec}). To obtain decidability, we focus on two types of objectives: almost-sure reachability and almost-sure safety objectives. 
Consider a \modelname\ $ \Game = (\Arena, (\Phi_i)_{i \in \Agt}) $ in which for each $ \Phi_i = \AS(\varphi_i) $, either $ \varphi_i \equiv \F R_i $ or $ \varphi_i \equiv \G R_i $. 
Then the formula used in Step 2 of the procedure for solving \ecore in \cref{sec:core-veri} is given as:
%
%
\begin{align*}
	\Phi_{W} 
			 = \bigwedge_{i\in W} \AS( \varphi_i ) \wedge \bigwedge_{i\notin W} \NZ( \neg \varphi_i  ) \wedge \Gamma.
\end{align*}
%
%
The formula used in Step 3 is given as
%
	$ \Psi_C = \bigwedge_{i \in C} \AS(\varphi_i ) $.
%

Observe that the TPCG $ \Game^{\Agt,\Phi_{W}} $ in Step 2 is, in fact, a
$1.5$-player game, since player $ 1 $ consists of all players in the original game. Thus, solving Step 2 amounts to finding a scheduler that satisfies $ \Phi_{W} $. This problem is decidable if $ \Gamma = \AS(\varphi) $ and $ \varphi $ is of the form $ \bigwedge_i \F R_i $, $ \bigwedge_i \G R_i $, or $ \bigwedge_i \G \F R_i $~\cite{BBS07}. Furthermore, if such a scheduler exists, then there is one that is deterministic and has finite memory. To solve Step 3, we reduce it to solving, for each $ C \subseteq \Agt $, a $2.5$-player zero-sum game in which the goal of player 1 is $ \Psi_C $. As shown in \cref{thm:conjunctions}, this problem is decidable. Therefore, we obtain the following theorem.

\begin{theorem}
	For a pair $ (\Game, \Gamma) $ where players' objectives are almost-sure reachability or almost-sure safety objectives, and property $ \Gamma = \AS(\varphi) $ with $ \varphi $ of the form $ \bigwedge_i \F R_i $, $ \bigwedge_i \G R_i $, or $ \bigwedge_i \G \F R_i $, the problems of \textsc{E-Core} and \textsc{A-Core} are decidable.
\end{theorem}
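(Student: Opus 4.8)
The plan is to assemble the decidability result directly from the procedure for \ecore given in \cref{sec:core-veri} together with the machinery developed in \cref{sec:zerosum,sec:conjobj}. By \cref{prop:core-char}, a pair $(\Game,\Gamma)$ is a yes-instance of \ecore if and only if there is a set $W\subseteq\Agt$ satisfying conditions (a) and (b). Since $\Agt$ is finite, I would enumerate all candidate sets $W$ and, for each, decide (a) and (b) effectively; then \ecore is decidable, and \acore follows by the duality noted after the problem definitions.

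For condition (a) --- Step 2 of the procedure --- I would first observe that the TPCG $\Game^{\Agt,\Phi_W}$ collapses player $2$ entirely (the complement coalition $-\Agt$ is empty), so the game is effectively a $1.5$-player game and satisfying $\Phi_W$ reduces to finding a single scheduler. Here $\Phi_W=\bigwedge_{i\in W}\AS(\varphi_i)\wedge\bigwedge_{i\notin W}\NZ(\neg\varphi_i)\wedge\Gamma$, where each $\varphi_i$ is $\F R_i$ or $\G R_i$ and $\Gamma=\AS(\varphi)$ with $\varphi$ a conjunction of $\F$, $\G$, or $\G\F$ formulae over regular sets. The negations $\NZ(\neg\varphi_i)$ are positive safety/reachability conditions of regular sets, and the whole $\Phi_W$ is therefore a qualitative objective of the shape handled by~\cite{BBS07}; I would invoke their decidability result for the $1.5$-player lossy-channel case, which also furnishes a deterministic finite-memory scheduler whenever one exists.

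For condition (b) --- Step 3 --- I would, for each coalition $C\subseteq\Agt\setminus W$, form the TPCG $\Game^{C,\Psi_C}=(\Arena^C,(\Psi_C,\neg\Psi_C))$ with $\Psi_C=\bigwedge_{i\in C}\AS(\varphi_i)$, a conjunction of almost-sure reachability and almost-sure safety objectives of regular sets. By \cref{thm:conjunctions} the winning region $\sem{\Psi_C}_1(\Game^{C,\Psi_C})$ is computable, so I can decide whether player $1$ (the coalition $C$) has a winning strategy from the initial state, i.e.\ whether $s_0\in\sem{\Psi_C}_1$. Condition (b) holds precisely when no such $C$ wins, which is a finite conjunction of these decidable checks. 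Combining the effective test for (a) with the effective test for (b) over all $W$ and all $C$ yields a terminating decision procedure.

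The main obstacle I anticipate is not the combinatorial enumeration but ensuring that the objective $\Phi_W$ in Step 2 genuinely falls within the decidable $1.5$-player fragment of~\cite{BBS07}: one must check that rewriting each $\NZ(\neg\varphi_i)$ as a positive reachability or positive safety condition of a regular set, and conjoining it with the $\AS$-part and with $\Gamma$, stays inside the allowed Boolean shape (conjunctions of $\F$, $\G$, $\G\F$ under $\AS$, together with positive conditions) rather than producing an objective of the undecidable type flagged in \cref{prop:undec}. The delicate point is that mixing $\AS$ and $\NZ$ atoms must not reintroduce an $\F\G$-style alternation; this is exactly why the hypothesis restricts $\varphi$ to the stated three forms, and I would verify carefully that under this restriction each $\Phi_W$ decomposes into components covered by \cref{thm:conjunctions} and by the $1.5$-player result, so that no case escapes the effective machinery.
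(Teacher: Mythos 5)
Your proposal follows essentially the same route as the paper: enumerate $W$, reduce Step~2 to a $1.5$-player scheduler-synthesis problem decidable via~\cite{BBS07}, and reduce Step~3 to computing winning regions of $2.5$-player zero-sum games with conjunctive $\AS$ objectives via \cref{thm:conjunctions}. The caveat you flag about the shape of $\Phi_W$ (the mixed $\AS$/$\NZ$ conjunction staying in the decidable fragment) is a real subtlety that the paper itself leaves implicit, so raising it is appropriate rather than a deviation.
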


\section{Concluding Remarks}\label{sec:conclusion}
This paper presents the first result on rational verification of concurrent
stochastic games on infinite arenas, specifically for lossy channel games. Our focus is on the decidability of verifying the core of multi-player concurrent stochastic lossy channel games. To this end, we provide an approach for solving 
concurrent 2.5-player zero-sum games with a conjunction of almost-sure safety and almost-sure reachability 
objectives, and infinite state arenas. Our approach extends previous methods, which were limited to either finite state arenas or turn-based games, to work on concurrent settings and infinite state arenas.

It is worth mentioning that most of the algorithms described in this work have non-elementary complexity in the worst case, as they address problems that can be reduced to the reachability
problem in lossy channel systems~\cite{phs02}.
One might wonder if it is possible to trade stochasticity for non-determinism in
the model to obtain an easier problem. Surprisingly, although the
winning region for non-deterministic reachability (i.e., equivalent
to $\NZ(\F \cdot)$) is computable, the winning region for non-deterministic safety (``does
there exists an infinite path $\hdots$'') is non-computable, due to
undecidability results by~\cite{Mayr03}.

Finally, an obvious direction for future work is to consider the NE concept and its related variants, such as strong NE and coalition-proof NE. However, addressing questions related to NE in concurrent games with probabilistic qualitative objectives is challenging, even in the finite state case. There are two main challenges: Firstly, because NE are strategy profiles where players do not behave as a coalition, limiting their ability to prevent deviations (see \cref{rem:jointdist}), finding the appropriate joint strategies is not straightforward. Secondly, encoding stability against any deviation in the NE setting produces more complex conjunctions of objectives, requiring significant extensions of the techniques presented in \cref{sec:conjobj}.
Reductions from concurrent to turn-based two-players games, have been used in previous work. For instance, in \textit{suspect games}~\cite{BBMU15} one player proposes a NE
while the second attempts to disprove it. Another approach is by
sequentializing games in order to compute \textit{punishing strategies/regions} to characterize NE~\cite{GHW15,GNPW20}.
However, these approaches encounter similar challenges above when stochastic
behaviours are introduced.
We conjecture that in order to capture the NE concept in its generality,
the reduction to 2.5-player games should feature concurrent actions to account for randomized actions and even counting strategies in the NE.

\OMIT{
Future work (not addressed here):
    \begin{itemize}
        \item NE's challenge: combining initial strategy with deviating
            strategies. The usual suspect game/sequentialization do not
            scale to the randomized strategy setting since players proposes
            distributions of actions, not simple actions. One could not simply
            abstract this proposition by its support as we have seen that
            actual probability values matter (counting strategies). Also,
            beyond the $2$-player case, dependence between proposed
            actions may arise (see \cref{rem:jointdist}). These problems already
            appear in the finite state case and are orthogonal to the current
            study.
        \item $\omega$-regular objectives in their full generality.
        \item Might require memory beyond simple counting
        \item Possible extension to the n-player parameterized case
        \item Regular model checking questions are also open (most winning
            regions are also regular).
        \item Limit-sure objectives have not been discussed
        \item Sure Reachability's winning region cannot be computed, however
            it is not clear whether decidability of membership can be computed
            or not. This question is nonetheless 
    \end{itemize}
}



\bibliography{bib}

\ifextended
  \appendix
  We provide here proofs and more detailed explanation from the main article.

\section*{Proof of \cref{prop:regclcg} (Regularity of \modelname )}

We recall the regularity Proposition:
\regclcg

\begin{proof}[Proof ($\cleq$ is wqo)]
    We define \[
        {last}(\mu) = \left\{\begin{aligned}
            m~&~\text{if}~\mu \in M^*\cdot m \\
            \epsilon~&~\text{if}~\mu=\epsilon
        \end{aligned}\right.
        \]
    Let $(l_k\cdot \mu_k)_k = (s_k)_{k\in\bbN}$
    be an infinite sequence of states.
    For every $m\in M \uplus \{\epsilon\}$,
    define:
    $ K_m = \{k\in\bbN~|~{last}(\mu_k) = m\}$.
    For $k \in \bbN$, we have by definition $k\in K_{{last}(\mu_k)}$.
    Then $\bbN = \cup_{m\in M\uplus\{\epsilon\}} K_m$ is a partition of an
    infinite set, hence there exists
    at least one infinite set $K_m$, $m\in M\uplus\{\epsilon\}$.
    Since the sequence $(s_k)_{k\in K_m}$ is infinite and $\preceq$ is a wqo,
    there exist $i\neq j\in K_m$ with $s_i \preceq s_j$ and since
    ${last}(\mu_i)={last}(\mu_j)=m$
    we also have
    $s_i \cleq s_j$.
\end{proof}

\begin{proof}[Proof of 1.]
    Let $s=l\cdot \mu,t=\tilde{l}\cdot \tilde{\mu}$
    two states such that $s\in \pre_i(X)$ and $s\cleq t$.

    Consider $\sigma_i\in\strat_i$ such that
    for all $\vec{\theta}\in\vec{\strat}$,
    $\Pr^{\vec{\theta}_{-i},\sigma_i}_s(\X X)>0$.
    We fix $\vec{\theta}\in\vec{\strat}$, and prove
    $\Pr^{\vec{\theta}_{-i},\sigma_i}_t(\X X)>0$ that is to say that the same
    actions\footnote{In fact, a single deterministic action
    ($\alpha$ s.t. $\sigma_i(\alpha~|~s)>0$) is sufficient
    for player~$i$, but this is not required for the current proof.}
    played from state $t$ are sufficient.

    There exists 
    $(l'\cdot \mu')\in X$
    such that
    $\Pr^{\vec{\theta}_{-i},\sigma_i}_s(\X
        (l'\cdot \mu'))>0$.
    Furthermore, there exist $\alpha$ and $f\in \op_M$ such
    that:
    \begin{itemize}
        \item $\sigma_i(\alpha~|~s) > 0$;
        \item $\Tab(l\cdot \mu)[l',f] > 0$;
        \item $P_{\lambda}(f(\mu), \mu') > 0$.
    \end{itemize}

    $s\cleq t$ so $(l\cdot \mu) \preceq
    (\tilde{l}\cdot \tilde{\mu})= t$ hence $l=\tilde{l}$
    (since the alphabets $L$ and $M$ are disjoint) and
    ${last}(s)={last}(t)$.

    Therefore, the same actions are allowed for all players, hence the same
    strategy can be applied from $t$:
    \begin{itemize}
        \item $\sigma_i(\alpha~|~t) > 0$;
        \item $\Tab(l\cdot \tilde{\mu})[l',f] > 0$;
    \end{itemize}
    
    It remains to prove that $P_{\lambda}(f(\tilde{\mu}), \mu') > 0$,
    or, equivalently, that $\mu'\preceq \tilde{\mu}$.
    We distinguish the three cases:
    \begin{itemize}
        \item If $f={nop}$, then
            $\mu'\preceq f(\mu)=\mu\preceq \tilde{\mu}=f(\tilde{\mu})$;
        \item If $f=!m$, then $\mu'\preceq
            f(\mu)=\mu\cdot m\preceq \tilde{\mu}\cdot m=
            f(\tilde{\mu})$;
        \item If $f=?m$, then $\mu=a_1\cdots a_k\cdot m$ and
            $\tilde{\mu}=w_0\cdot a_1 \cdots a_k \cdot w_k \cdot m \cdot w_{k+1}$
            for some $a_1 \hdots a_k \in M$ and $w_0 \hdots w_{k+1}\in M^*$.
            \begin{itemize}
                \item If $w_{k+1}=\epsilon$,
                    then $f(\tilde{\mu})=w_0\cdot a_1\cdots a_k$;
                \item Otherwise $w_{k+1}=w_{k+1}'\cdot x$, then
                    $x=m$ (because ${last}(\mu) = {last}(\tilde{\mu})$)
                    so $f(\tilde{\mu}) = w_0 \cdot a_1 \cdots a_k \cdot
                        w_k \cdot m \cdot w_{k+1}'$.
            \end{itemize}
            In both cases, $f(\mu) = a_1\cdots a_k\preceq f(\tilde{\mu})$.
    \end{itemize}

    We conclude by proving that any upward-closed set $V$ for $\cleq$ is
    regular: It can be written as a finite union
    $V=\cup_{i=1}^k V_i$ where for all $i$,
    $V_i = \{s\in S~|~v_i \cleq s\}$ for some $v_i \in V$.
    By writing $v_i=l\cdot m_1 \cdots m_k$, we get
    \[
        V_i = \left\{\begin{aligned}
            l&~\text{if}~k=0\\
            l\cdot M^* \cdot m_1 \cdot M^* m_k \cdot M^*&~\text{otherwise}
        \end{aligned}\right.
    \]
\end{proof}

\begin{proof}[Proof of 2.]
    Given a regular language $R$ represented by
    a finite non-deterministic automaton (NFA):

    \begin{itemize}
    \item If $R\subseteq L\cdot M^*$, then one can build a NFA for
        $\uparrow R=\{l\cdot \mu'~|~
        \exists \mu: l\cdot \mu\in R \wedge l\cdot \mu\cleq l\cdot \mu'\}$ by
        duplicating the accepting states (into non-accepting copies) and adding
        a self-loops reading $M$ on every non-accepting states only (this ensures
        that the last letter on the channel is the same,
        according to $\cleq$'s definition).
    \item If $R\subseteq L\cdot M^*$, then for any $l$ one can build a
        NFA for the residual language
        $l^{-1}\cdot R=\{\mu\in M^*~|~l\cdot \mu\in R\}$
        by reading the letter $l$ and updating the initial states accordingly.
    \item If $f\in\op_M$ and $R\subseteq M^*$, one can build a NFA for
        $f^{-1}(R)$;
    \end{itemize}

    Therefore, $\pre_i(R)$ can be written as the following
    sequence of operations:
    \[
        \bigcup_{\alpha}
        \bigcap_{\beta}
        \bigcup_{
            \substack{
            (l_1,l_2,f)\\
            \Tab(l_1,(\alpha,\beta))[l_2,f]>0
            }
        }
        l_1\cdot
        f^{-1}(l_2^{-1}\cdot (\uparrow R))
    \]
    Where
       $\alpha$ and $\beta$ range over the finite set $\Act$ and
       $(l_1,l_2,f)$ ranges over the finite set $L^2\times \op_M$.
\end{proof}

\begin{proof}[Proof of 3.]
    Let $N_L$ denote the random variable counting the number of visits of
    the set of states $L$, that is to say the number of times the channel
    is empty. We will prove that $\esp(N_L)=\infty$ hence
    $\Pr(\text{the channel is empty infinitely often})=1$.

    We do an analysis in the \textbf{worst-case}
    scenario where messages are always
    produced (operation $!m$) and never consumed (operation $?m$), so the
    only way for the channel configuration to decrease in length is
    by message drops.

    For any $k\geq 0$, let $p_k = \Pr^{\sigma}(\X^k L)$ namely the
    probability that the channel is empty at the $k$-th iteration
    \begin{itemize}
        \item From state $0$ to $1$, the produced message can
            be dropped with probability
            $\lambda$ at rounds $0$, $1$, $\hdots$ $k-1$,
            hence a global probability of
            $1-(1-\lambda)^k$.
        \item From state $1$ to $2$, the produced message can
            be dropped with probability
            $\lambda$ at rounds $1$, $2$, $\hdots$ $k-1$,
            hence a global probability of
            $1-(1-\lambda)^{k-1}$.
        \item $\hdots$
        \item From state $k-1$ to $k$, the produced message can
            only be dropped immediately, with
            probability $\lambda=1-(1-\lambda)^1$.
    \end{itemize}
    We conclude that:
    \[
        p_k = \prod_{j=0}^{k-1} (1-(1-\lambda)^{k-j})
            = \prod_{j=1}^{k} (1-(1-\lambda)^{j})
    \]
    We apply Raabe's test for the series $\sum_{k\geq 0} p_k$:
    \[
        \rho_k = k\left(\frac{p_k}{p_{k+1}}-1\right)
        = k\frac{(1-\lambda)^k}{1-(1-\lambda)^k}
    \]
    Since $0<\lambda<1$, $\lim_{k\rightarrow \infty} \rho_k = 0$,
    hence:
    \[
        \esp(N_L)\geq \sum_{k\geq 0}p_k =\infty
    \]

\end{proof}

\section*{A note about the Absorbing Assumption}
Towards the analysis of almost sure reachability and later the proof of
\cref{thm:conjunctions}, we assume
that any regular reachability/safety objective $B$ is
\defn{absorbing}, that is to say, once reached cannot be exited.
Notice $B$ is a regular set of states, not only locations:
Such a transformation is therefore not possible in the setting
of \modelname{} (as $\Tab$ is defined on a per-location basis as seen
in \cref{def:clcg}). However, this assumption still induces an infinite game
structure whose semantics satisfies the same required properties for
effective computations.

More precisely, we provide the following remark:
\begin{remark}
    \label{lem:absorbing}
    Let $B$ be a regular set of states of $\calA$.
    
    We construct the new set of locations $L_B = L\times \{0,1\}$
    where the second letter/coordinate
    tracks whether $B$ has already been reached, and update the
    definition of $\Pr^{\vec{\sigma}}$ accordingly, namely we enforce that:
    \[
        \Pr^{\vec{\sigma}}(\G (B \rightarrow ((L\times\{1\})\cdot M^*)))=1
    \]
    We then define $B'=(L\times\{1\})\cdot M^*$.
    
    Moreover, in this new game ``arena'' $\calA_B$, we have:
    \begin{itemize}
        \item $\pre_i(X)$ is still upward-closed (for $\cleq$), and can be computed
            whenever $X$ is regular;
        \item $L\times \{0,1\}$ is a finite attractor in $\calA_B$;
        \item The reduction preserves winning
            strategies, as well as FM winning strategies,
            between $\calA$ and $\calA_B$.
    \end{itemize}
    Note however that PFM are not preserved as memory bits might be required
    to remember if some set of states has been reached in the past.
\end{remark}


\section*{Positive Reachability Algorithm}
\begin{algorithm}[b]
\caption{Positive Reachability algorithm}
\label{fig:posreachalgo}
    \textbf{Input}: An arena $\calA$, $i\in \Agt$ and a regular set $R$\\
    \textbf{Output}: $U_k = \sem{\NZ(\F R)}_i$
    \begin{algorithmic}
    \STATE $k\gets 0; U_0 \gets R$
    \REPEAT
        \STATE $U_{k+1} \gets U_{k}\cup \pre_i(U_k)$
    \UNTIL{$U_k = U_{k+1}$}
    \end{algorithmic}
\end{algorithm}

We restate first \cref{lem:reachalgo} and explicitly
describe it in~\cref{fig:posreachalgo}:
\lemreachalgo

\begin{proof}
    \begin{itemize}
        \item Every $U_k$ can be be written as $U_k = R \cup V_k$ with
        $V_k$ upward-closed set. Moreover, for all $k$, $V_{k}\subseteq V_{k+1}$
        so by wqo property, there exists $k$ such that
        $V_{k+1}=V_k$ and the algorithm terminates.

    \item Winning strategy for $i$: We
        define $\sigma_i(s)$ for every $s\in S$ by
        \begin{itemize}
        \item If $s \in U_{k+1}\backslash U_k$ for some $k\in\bbN$,
            define $\sigma_i(s)\in \dist(\Act_i(s))$ such that
            for all action $\alpha_{-i}$ of the opponent,
            \[
                \sum_{\alpha_i} 
                \sigma_i(\alpha_i~|~s)
                p(s,(\alpha_i,\alpha_{-i}),U_k)>0
            \]
        \item Define $\sigma_i(s)$ arbitrary otherwise.
    \end{itemize}

    \item To define a spoiling strategy,
    We define $\sigma_{-i}(s)$ for every $s\in S$ by
    \begin{itemize}
        \item If $s\in U_K$, take $\sigma_{-i}(s)$ arbitrary;
        \item If $s\in \overline{U_K}$.
            Since $U_K=\pre_i(U_K,\vec{\gamma})$, by considering
            $\delta_i = \uniform(\gamma_i(s))$, there exists
            some action $\alpha_{-i}\in \gamma_{-i}(s)$ such that
            \[
                \sum_{\alpha_i} \underbrace{\delta_i(\alpha_i)}_{>0}
                    p(s,(\alpha_i,\alpha_{-i}),U_K) = 0
            \]
            Hence, for any action $\alpha_i \in \gamma_i(s)$,
            \[
                p(s,(\alpha_i,\alpha_{-i}, U_k)) = 0
            \]
            We therefore fix $\sigma_{-i}(s) = \alpha_{-i}$.

            We check that for every strategy $\sigma_i$ of $i$,
            $\Pr^{\sigma_i,\sigma_{-i}}(\X \overline{U_K})=1$ then
            by induction on $k$:
            $\Pr^{\sigma_i,\sigma_{-i}}(\wedge_{i\leq k} \X^i \overline{U_K})=1$ then by
            going to the limit
            $\Pr^{\sigma_i,\sigma_{-i}}(\G \overline{U_K})=1$, that
            is to say:
            $\Pr^{\sigma_i,\sigma_{-i}}(\F U_K)=0$.
    \end{itemize}
 \end{itemize}
\end{proof}

\section*{Almost Sure Reachability}

We restate the algorithm from \cite{AHK07} for almost-sure reachability
in \cref{alg:asreach}. Note that
$
\sem{\AS(\G X)}_{-i}=
\overline{\sem{\NZ(\F \overline{X})}_i}$.


\newcounter{oldalgo}
\setcounter{oldalgo}{\value{algorithm}}
\setcounter{algorithm}{0}
\begin{algorithm}[t]
    \caption{Almost-Sure Reachability}
    \label{alg:asreach-restate}
    \textbf{Input}: An arena $\calA$, $i\in\Agt$ and a regular set $R$\\
    \textbf{Output}: $D_k=\sem{\AS(\G \F R)}_i$
    \begin{algorithmic}
        \STATE $k\gets 0;\quad D_0 \gets S; \quad \calA_0 \gets \calA$
        \REPEAT
            \STATE $C_{k} =
                \sem{ \AS(\G(D_k\backslash R))}_{-i}(\calA_k)$
            \STATE $Y_k = \sem{\NZ( \F R)}_i(\calA_k)$
            \STATE $D_{k+1} =
                \sem{\AS(\G(Y_k))}_i(\calA_k)
                $
            \STATE $\calA_{k+1} \gets~\calA_{k}$ where
                $\Act_i^{\calA_{k+1}}=\stay_i^{\calA_k}(D_{k+1})$
        \UNTIL{$D_k = D_{k+1}$}
    \end{algorithmic}
\end{algorithm}

\setcounter{algorithm}{\value{oldalgo}}

\lemterminationas

\begin{proof}
    Notice first that $(D_k)_k$ is a decreasing sequence: at iteration
    $k$, the states from
    $\sem{ \NZ(\F (D_k\backslash Y_k))}_{-i}(\calA_k)$ are removed from
    $D_k$.
    We prove by induction on $k\in\bbN$, that the set
    $\overline{D_k}=S\backslash D_k$ is upward-closed:\\
    Initially, $\overline{D_0}=\emptyset$ is upward-closed. Assume now
    the result holds for some $k\in\bbN$, and consider any $s\preceq t$ with
    $s\in \overline{D_{k+1}}$.
    If $s\in \overline{D_k}$, we have
    $t\in \overline{D_k}\subseteq \overline{D_{k+1}}$
    by induction hypothesis.
    Otherwise, $s\in D_k\backslash D_{k+1}=
    \sem{ \NZ(\F (D_k\backslash Y_k))}_{-i}(\calA_k)$.
    By~\cref{lem:reachalgo}, we distinguish the two following
    cases:
    \begin{itemize}
        \item Either $s$ belongs to an upward-closed set $V$, w.r.t.
            the wqo $\preceq_k$ obtained by restricting
            $\preceq$ to the states in $D_k$. Therefore,
            if $t\in D_k$, then $s\preceq_k t$ and finally $t\in V$.
            Otherwise, $t\notin D_k$ so we already have $t\in \overline{D_{k+1}}$;
        \item Either $s\in D_k\backslash Y_k =
            D_k\backslash \sem{\NZ(\F R)}_i(\calA_k)
                    = \sem{\AS(\G (D_k\backslash R))}_{-i}(\calA_k)$.
             This means there exists $\alpha_{-i}$ such that
             for all $\alpha_{i}$,
             $p(s,(\alpha_i,\alpha_{-i}),(D_k\backslash Y_k))=1$.
             By playing the same action from $t$, player $-i$ ensures
             with positive probability to still
             reach one of those states in $D_k\backslash Y_k$,
             so $t\notin D_{k+1}$.
     \end{itemize}
     $(\overline{D_k})_k$ is an increasing sequence of upward-closed
     sets so it converges to a limit in a finite number of iterations.
\end{proof}

We restate now the correctness result of the algorithm, presented
in \cref{lem:ascorrectness}. Note that the proof is for repeated
reachability meaning the algorithm computes the actual set
$\sem{\AS(\G \F R)}_i$. However, by assuming without loss of generality
that the target set $R$ is absorbing, that is to say that set $R$ cannot
be left, we have $\sem{\AS(\G \F R)}_i = \sem{\AS(\F R)}_i$.
\begin{lemma}
  \label{lem:ascorrectness-recall}
    Given a regular set $R$, the almost-sure reachability algorithm
    from \cite{AHK07}, implemented symbolically in
    in \cref{alg:asreach}, 
    effectively
    computes the set $W=\sem{\AS(\G \F R)}_i$.

     Moreover, $\sigma_i: h\cdot s\in S^*\cdot W\mapsto \uniform(\stay_i(W))$
     is a PFM winning strategy for $i$, while there exists
     a C winning strategy for its opponent (objective $\NZ(\F\G \overline{R})$).
\end{lemma}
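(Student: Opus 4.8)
The plan is to take termination for granted (\cref{lem:terminationas}: the algorithm stabilises at $W = D_K$, a downward-closed set) and prove the two remaining correctness claims, namely that $W = \sem{\AS(\G\F R)}_i$ with the displayed uniform strategy winning for $i$, and that every state of $\overline W$ admits a counting winning strategy for the opponent. These two constructions deliver the two inclusions $W\subseteq\sem{\AS(\G\F R)}_i$ and $\overline W\subseteq\sem{\NZ(\F\G\overline R)}_{-i}\subseteq\overline{\sem{\AS(\G\F R)}_i}$, whence $W=\sem{\AS(\G\F R)}_i$. Both directions are obtained by lifting the finite-state correctness proof of \cite{AHK07} to the infinite arena, the bridge being the finite attractor $A = L\cdot\epsilon$ of \cref{prop:regclcg}, which is visited infinitely often almost surely under every profile and which lets us replace ``positive probability'' statements by uniform lower bounds over a finite set.

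For player $i$, the strategy $\sigma_i = \uniform(\stay_i(W))$ plays, at every history ending in $W$, uniformly among the actions that keep the play in $W$; by \cref{def:stay} this gives $\Pr^{(\vec{\sigma}'_{-i},\sigma_i)}(\X W) = 1$ at every step, hence $\Pr(\G W) = 1$ against any $\vec{\sigma}'_{-i}$. At the fixpoint $W = D_K \subseteq Y_K = \sem{\NZ(\F R)}_i(\calA_K)$, so from every state of $W$ player $i$ has positive probability of reaching $R$ inside the restricted arena; since $\uniform(\stay_i(W))$ assigns positive weight to every allowed action, in particular to the action witnessing membership in $\pre_i(\cdot)$ at each backward-reachability layer of \cref{lem:reachalgo}, the uniform strategy itself reaches $R$ with positive probability from each $s\in W$. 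First I would note that from an empty-channel state this positive probability is realised within a bounded number of steps (the states reachable in $k$ steps from a fixed configuration form a finite set), so over the finite set $A\cap W$ there is a uniform $p>0$ and a bound $N$ with: from any attractor state in $W$, $R$ is hit within $N$ steps with probability $\geq p$. Combining $\Pr(\G W)=1$, $\Pr(\G\F A)=1$ (finite attractor), and a conditional Borel--Cantelli argument over the infinitely many attractor visits yields $\Pr(\G\F R)=1$.

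For the opponent $-i$, from each $s\in\overline W$ I would build a counting strategy (\cref{def:counting}) from the data produced by the algorithm at the iteration $k$ for which $s\in D_k\setminus D_{k+1}$. The ``value'' component $\sigma^v$ is the PFM escape strategy witnessing that $i$ cannot remain in $Y_k$ from $s$, equivalently that the opponent can force $\overline{D_{k+1}}$, i.e. a drop into a lower layer or into $\sem{\AS(\G(D_k\setminus R))}_{-i}(\calA_k)$; the ``urgent'' component $\sigma^u$ is the PFM strategy keeping the play in $\overline R$ for one more step. Mixing them with the fixed weights $p_k = 2^{-1/2^k}$ gives, through the convergent product $\prod_k p_k = p\in(0,1)$, a positive probability of never abandoning the safe action, while the complementary weights $1-p_k$ drive almost-sure progress down the finite chain $D_0\supsetneq\cdots\supsetneq D_K$ until the play is trapped in a region from which $R$ is avoided forever. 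This is exactly the Hide-or-Run mechanism of \cite{AHK07}; the finite attractor again reduces the analysis to the finitely many attractor states, so the finite-state guarantee of positive probability for $\F\G\overline R$ transfers, and the two PFM components, being supported on a regular partition of $\overline W$, assemble into a genuine counting strategy.

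The main obstacle is the opponent direction: unlike the reacher, positional or even finite-memory strategies provably do not suffice (the Hide-or-Run example), so one must verify that the counting construction, with its single fixed real sequence $(p_k)_k$ shared across all histories, still achieves $\NZ(\F\G\overline R)$ on the infinite arena. The delicate points are (i) checking that the layer a state sits in, and hence which PFM components to mix, is captured by a regular partition of the upward-closed complement $\overline W$ so that $\sigma^u,\sigma^v$ are genuinely PFM; and (ii) that the finite-attractor reduction correctly interleaves the two independent phenomena driving the counting strategy, the geometric ``hide'' probability $\prod_k p_k$ and the recurrent ``run'' attempts at attractor visits, without the message-loss dynamics destroying the positive trapping probability. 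Everything else, namely staying in $W$, positive reachability, and the uniform lower bounds, is routine given \cref{prop:regclcg} and \cref{lem:reachalgo}.
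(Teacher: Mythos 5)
Your proof follows essentially the same route as the paper's: almost-sure reachability for player $i$ is bootstrapped from positive reachability via a uniform lower bound over the finite attractor, and the opponent's counting strategy is assembled layer-by-layer from the Hide-or-Run mechanism of \cite{AHK07}, exactly as in the paper's appendix proof. The one imprecision is your claim that the complementary weights $1-p_k$ ``drive almost-sure progress down the chain'': the actual mechanism is that the uniform component only guarantees a \emph{positive} probability of dropping to a lower layer whenever player $i$ plays an action outside the restricted arena, and this combines with the hide probability $\prod_k p_k>0$ to yield $\NZ(\F\G\overline{R})$ --- no almost-sure trapping is needed, nor is it true.
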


\begin{proof}
    For any $s\in D_k$ for a fixed $k\in\bbN$, we denote
    \[ 
        v(s) = \inf_{\sigma_{-i}}\Pr^{\sigma}_s( \F R )
    \]
    Recall that $i$ confines the game to $D_k$ (restricted arena $\calA_k$) so
    almost surely all successors are in $D_k$ and we will
    confine the analysis to states in $D_k$.

    As any $s\in D_k$ satisfies $s\in Y_k = \sem{\NZ(\F R)}_i$,
    we have $v(s) > 0$ and
    we can introduce $\epsilon = \max_{a\in A} v(a) > 0$ (finite set).

    For any strategy $\sigma_{-i}$, and any history $(ht)\in D_k^+$ whose
    last state is $t$, we have the following Markov property:
    \[
        \Pr^{\sigma}_s(\G \overline{R}~|~ht)=
        \Pr^{\sigma(h\cdot)}_t(\G \overline{R})
    \]
    (Under the assumption that $\Pr^{\sigma}_s(ht)>0$)
    Therefore, by total probability:
    \[\begin{aligned}
        \Pr^{\sigma}_s(\G \overline{R}) &= 
            \sum_{h\in (D_k)^*}\sum_{a\in A}
            \Pr^{\sigma}_s(ha) \Pr^{\sigma}_s(\G \overline{R}~|~ha) \\
            &\leq
            \sum_{h\in (D_k)^*}\sum_{a\in A}
            \Pr^{\sigma}_s( ha) (1-v(a))\\
            & \leq
            (1-\epsilon) \Pr^{\sigma}_s(\F A)
       \end{aligned}
    \]
    Then by induction on $n\in\bbN$:
    \[
        1-v(s) \leq (1-\epsilon)^n
    \]
    So $v(s)=1$ for any $s\in D_k$.
    
    Repeated reachability is obtained by applying the Markov property on histories
    ending up
    in $R\cap D_k$.
    This shows that $\sigma_i$ is winning on $\cup_{k\geq 0}D_k$.
    
    We prove now that
    there is a counting winning strategy for $-i$ for the opposite objective
    from any state $s\in \overline{D_k}$ at some iteration $k$.

    We distinguish between the instruction that removed $s$:
    \begin{itemize}
        \item If $s$ got removed during the computation of $Y_k$:
            $s\in D_k\backslash Y_k= \sem{\AS(\G (D_k \backslash R))}_{-i}$
            for some $k\in\bbN$,
            there exists $\alpha_{-i}\in\Act_{-i}(s)$ such that
            for every $\alpha_{i}\in \Act_{i}^{\calG_k}(s)$.

            For every history $h\in S^n$ of length $n$, we define
            $\sigma_{-i}(\alpha_{-i}~|~hs) = p_n$ and
            for every other $\beta\neq \alpha_{-i}$,
            $\sigma_{-i}(\beta~|~hs) = \frac{1-p_n}{|\Act_{-i}(s)|-1}$.

            Against any strategy $\sigma_i$, either player~$i$ plays only
            restricted actions from $\Act_i^{\calA}$ and therefore
            the play stays confined to $D_k\backslash Y_k$
            with probability $\prod_{j\geq 0} p_j$.
            Either player~$i$ plays other actions, in which case
            the game is not confined to $\calA_k$ and with non-zero probability
            reaches another unsafe state from $\calA_l$ ($l<k$).

        \item Otherwise, $s$ was removed by the computation of $D_{k+1}$:
            $s\in Y_k\backslash D_{k+1}$.
            We define in this case $\sigma_{-i}(hs)$ uniformly, for
            all $hs\in S^+$.

            Against any strategy $\sigma_i$, there is a non-zero probability
            to reach in finitely many steps a state $t\in C_l$ for some $l$,
            then a non-zero probability to never reach $R$.
    \end{itemize}
\end{proof}

\section*{Proof(s) of \cref{thm:conjunctions}}

\begin{lemma}[Restate of \cref{lem:nonz}]
    \label{lem:nonz2}
    If $\Phi$ is a conjunctive objective containing
    only $\NZ(\F \cdot),\AS(\G \cdot)$ and $\AS(\F \cdot)$
    objectives of regular sets,
    then $\sem{\Phi}_i(\calG)$
    is computable, and 
    there exists a winning PFM strategy for player~$i$.
\end{lemma}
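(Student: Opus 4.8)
The plan is to peel off the almost-sure conjuncts one class at a time, reducing $\Phi$ to a pure positive-reachability problem on a restricted arena, and then to certify the winning region with a single \emph{uniform} strategy whose correctness rests on the finite-attractor property of \Cref{prop:regclcg}. I would first normalise the safety part: since $\AS(\G R)\wedge\AS(\G R')\equiv\AS(\G(R\cap R'))$ and regular sets are closed under intersection, all $\AS(\G\cdot)$ conjuncts merge into one $\AS(\G R_S)$ (take $R_S=S$ if none). Its winning region $\sem{\AS(\G R_S)}_i$ is computable via determinacy and \Cref{lem:reachalgo}, and it is stay-closed. Since any strategy winning $\Phi$ must keep the play inside it --- leaving it with positive probability would hand the opponent a positive-probability escape from $R_S$ --- I pass to the restricted arena $\calA'$ that confines player~$i$ to $\stay_i(\sem{\AS(\G R_S)}_i)$ and deletes states with no surviving action. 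Every play of $\calA'$ then satisfies the safety conjunct outright, and by \Cref{prop:regclcg} all of these objects stay regular and effectively computable.

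Next I would handle $\bigwedge_j\AS(\F R_j)\wedge\bigwedge_l\NZ(\F Q_l)$ inside $\calA'$. The candidate region is the intersection of the individual regions, but since restricting player~$i$ to stay inside one almost-sure region can destroy positive reachability toward another target, a one-shot intersection need not be stable, so I iterate. Starting from $V=\sem{\AS(\G R_S)}_i$, I recompute each $\sem{\AS(\F R_j)}_i$ in the arena that restricts player~$i$ to $\stay_i(V)$ (\Cref{alg:asreach}), replace $V$ by the intersection of these regions, and repeat until $V$ stabilises; the complements removed from $V$ are upward-closed for $\cleq$ and only grow, so the wqo of \Cref{prop:regclcg} forces termination after finitely many rounds. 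On the stabilised $V^\ast$ I compute the $\sem{\NZ(\F Q_l)}_i$ regions once and intersect, obtaining the candidate $W^\ast$. Necessity, $\sem{\Phi}_i(\calG)\subseteq W^\ast$, is then immediate: winning $\Phi$ entails winning every conjunct and hence respecting each stay-restriction, so a $\Phi$-winning strategy is a legal strategy of the stabilised arena that wins each conjunct.

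For sufficiency I would take $\sigma_i$ to play $\uniform(\stay_i(V^\ast))$ on $W^\ast$; this is PFM, being positional and constant on each of the finitely many regular regions cut out by $\stay_i(V^\ast)$. The finite attractor does the real work, and this is the step I expect to be the main obstacle. Confinement to $V^\ast$ secures the safety conjunct. At the fixpoint $V^\ast\subseteq\sem{\AS(\F R_j)}_i$ in the arena restricted to $\stay_i(V^\ast)$, so $R_j$ is \emph{positively} reachable from every state of $V^\ast$ there; since positive reachability is monotone in player~$i$'s available actions and $\uniform(\stay_i(V^\ast))$ plays each such action with positive probability, uniform play inherits positive reachability of each $R_j$ while never leaving $V^\ast$. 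By \Cref{prop:regclcg} the finite attractor is hit infinitely often almost surely, and from its finitely many states in $V^\ast$ each $R_j$ is reached within boundedly many steps with probability bounded below by a positive constant; independence across the infinitely many returns then yields $\F R_j$ almost surely, simultaneously for all $j$ because hitting one target never forces leaving $V^\ast$. The conjuncts $\NZ(\F Q_l)$ hold by the same monotonicity, uniform play preserving the positive probability along the witnessing finite path of \Cref{lem:reachalgo}. Hence $\calG,\sigma_i,s\vDash\Phi$ for every $s\in W^\ast$, giving both computability of $\sem{\Phi}_i(\calG)=W^\ast$ and the desired PFM winning strategy.
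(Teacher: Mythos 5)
Your overall architecture --- merging the safety conjuncts into one $\AS(\G R_S)$, restricting to $\stay_i$ of its winning region, handling almost-sure reachability next, finishing with positive reachability, and certifying the result with a uniform PFM strategy via the finite attractor --- matches the paper's proof. But there is a genuine gap in your treatment of the $\AS(\F\cdot)$ conjuncts: you intersect the individual winning \emph{regions} $\sem{\AS(\F R_j)}_i$ and confine the play to that intersection via $\stay_i(V^\ast)$, justifying completeness by the claim that a $\Phi$-winning strategy must ``respect each stay-restriction''. That claim is false for reachability conjuncts: $\AS(\F R_j)$ is discharged the moment $R_j$ is visited, after which a winning play is free to leave $\sem{\AS(\F R_j)}_i$; only the \emph{pending} obligations constrain where the play may go. Concretely, take a deterministic one-player chain $l_0\to l_1\to l_2$ with $l_2$ a sink and no channel operations, and set $R_1=l_1\cdot M^*$, $R_2=l_2\cdot M^*$. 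The unique play from $l_0$ wins $\AS(\F R_1)\wedge\AS(\F R_2)$, yet $\sem{\AS(\F R_1)}_i\cap\sem{\AS(\F R_2)}_i$ contains only the states over $l_0$ and $l_1$ and is not stay-closed (from $l_1$ the only move exits it); your iteration, which only ever shrinks $V$, therefore collapses $W^\ast$ to $\emptyset$ even though $l_0\in\sem{\Phi}_i$. The same failure already occurs with a single reachability conjunct.

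The paper avoids this with the absorbing transformation (the appendix remark on the absorbing assumption): each target is replaced by an absorbing tracked version $R_j'$ on an augmented location set, after which $\bigwedge_j\AS(\F R_j)$ is equivalent to the \emph{single} objective $\AS(\F R_r)$ with $R_r=\bigcap_j R_j'$ --- an intersection of \emph{targets}, not of winning regions. One call to \cref{alg:asreach} then yields $W_r$, and confining to $\stay_i(W_r)$ is legitimate precisely because $R_r$, once entered, is never left. Note also that \cref{alg:asreach} is itself stated under the absorbing assumption, so invoking it on the raw $R_j$ as you do is not licensed. Your sufficiency argument (uniform play plus the finite attractor, monotonicity of positive reachability in the available actions) is sound as far as it goes, but it certifies the wrong set; the missing ingredient is the absorbing/tracking construction, not a fixpoint over region intersections.
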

\begin{proof}
    We define the following regular sets:
    \[
        R_s = \bigcap_{\AS(\G R)\in \Phi} R\quad\text{and}\quad
        R_r = \bigcap_{\AS(\F R)\in \Phi} R
    \]
    \begin{itemize}
        \item We compute $W_s = \sem{\AS(\G R_s)}_i(\calG) = \overline{\sem{\NZ(\F \overline{R_s})}_{-i}(\calG)}$ (\cref{lem:reachalgo});
        \item As $W_s$ is regular, we can build the game $\calG^s$ restricted
            to states in $W_s$ and actions $\stay_i^{\calG}(W_s)$,
            hence for every
            $\vec{\sigma}\in \vec{\strat}^{\calG^s}$, we now have
            $
                \Pr^{\vec{\sigma}}( \G R_s) = 1
            $
        \item We compute $W_r = \sem{\AS(\F R_r)}_i(\calG^s)$ the winning
            region reaching all almost-sure reachability objectives
            simultaneously (\cref{lem:ascorrectness}). Without loss of generality, one can assume
            all reachability objectives to be absorbing;
       \item We further restrict $\calG^s$ to $\calG^r$ by computing
            $\stay_i^{\calG^s}(W_r)$.
       \item Finally, we compute the desired set
       $ W=\cap_{\NZ(\F R)\in\Phi} \sem{\NZ(\F R)}_i(\calG^r)$.
       \item As seen in \cref{lem:ascorrectness},
       any strategy picking all available actions at random is winning for 
        almost-sure reachability objectives. Furthermore, any state $s\in W$
        has a positive probability of satisfying any positive reachability objective.
    \end{itemize}
\end{proof}

Towards a proof of \cref{lem:nonz}, we provide the intermediary lemma
for a necessary condition:
\begin{lemma}
    \label{lem:nzas}
    Let $\vec{\sigma}$ be a strategy profile.
    If 
    $\vec{\sigma},s\vDash \AS(\F R_1) \wedge \NZ(\G \overline{R_2})$,
    then
    $\Pr^{\vec{\sigma}}_{s}(\F R_1 \wedge \G \overline{R_2}) > 0$.
    Moreover, if $\sigma_i$ is winning for
    $\AS(\F R_1) \wedge \NZ(\G \overline{R_2})$,
    then
    $\Pr^{\sigma}_{s}(\F (R_1\cap \sem{\NZ(\G \overline{R_2})}_i))>0$.

\end{lemma}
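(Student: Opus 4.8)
The first assertion is elementary. Partition the paths according to whether $\F R_1$ holds; since $\neg\F R_1$ is $\G\overline{R_1}$,
\[
\Pr^{\vec{\sigma}}_s(\G \overline{R_2}) = \Pr^{\vec{\sigma}}_s(\F R_1 \wedge \G \overline{R_2}) + \Pr^{\vec{\sigma}}_s(\G \overline{R_1} \wedge \G \overline{R_2}).
\]
The hypothesis $\vec{\sigma},s\vDash\AS(\F R_1)$ gives $\Pr^{\vec{\sigma}}_s(\G \overline{R_1}) = 0$, so the last term vanishes and $\Pr^{\vec{\sigma}}_s(\F R_1 \wedge \G \overline{R_2}) = \Pr^{\vec{\sigma}}_s(\G \overline{R_2})$, which is strictly positive by $\vec{\sigma},s\vDash \NZ(\G\overline{R_2})$.

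For the second assertion write $G = \sem{\NZ(\G \overline{R_2})}_i$ and let $\sigma = (\vec{\sigma}'_{-i}, \sigma_i)$ for an arbitrary opponent profile. I would argue by contradiction, assuming $\Pr^{\sigma}_s(\F(R_1 \cap G)) = 0$. As $\sigma_i$ wins $\AS(\F R_1)$, for this profile $\Pr^{\sigma}_s(\F R_1) = 1$, so the play almost surely has a well-defined first visit to $R_1$; and since the event that this first visit lies in $G$ is contained in $\F(R_1 \cap G)$, which has probability $0$, that first $R_1$-state lies in $R_1 \cap \overline{G}$ almost surely.

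The heart of the proof is to convert this into an opponent that defeats $\sigma_i$ on $\NZ(\G \overline{R_2})$. By determinacy of the qualitative objective (\cite{Martin75}, as already used in \cref{sec:zerosum} to get $\overline{\sem{\NZ(\G\overline{R_2})}_i} = \sem{\AS(\F R_2)}_{-i}$), from each state $t \in \overline{G}$ the opponent has a strategy $\tau^t_{-i}$ winning for $\AS(\F R_2)$ from $t$. I would define a composite opponent profile $\vec{\tau}_{-i}$ that follows $\vec{\sigma}'_{-i}$ on every history not yet visiting $R_1$ and, at the first visit to a state $t \in R_1$, switches to $\tau^t_{-i}$. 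Since $\vec{\tau}_{-i}$ and $\vec{\sigma}'_{-i}$ agree on all histories avoiding $R_1$, the two induced measures coincide on histories ending at the first $R_1$-visit, so under $(\vec{\tau}_{-i}, \sigma_i)$ this first visit still occurs almost surely and still lands in $\overline{G}$. From $t$ the strategy $\tau^t_{-i}$ then forces $\F R_2$ almost surely against any continuation of $\sigma_i$, whence $\Pr^{(\vec{\tau}_{-i}, \sigma_i)}_s(\G \overline{R_2}) = 0$, contradicting that $\sigma_i$ wins $\NZ(\G \overline{R_2})$. Thus $\Pr^{\sigma}_s(\F(R_1 \cap G)) > 0$.

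The routine ingredients are the measure decomposition above and the Markov bookkeeping ensuring the distribution of the first $R_1$-entry is unaffected by the switch. The main obstacle, and the only place where the game structure really enters, is the switching construction: one must invoke determinacy to extract, uniformly over $\overline{G}$, an opponent strategy forcing $\AS(\F R_2)$, assemble $\vec{\tau}_{-i}$ as a bona fide (possibly infinite-memory) strategy, and justify the coupling that keeps the pre-$R_1$ behaviour identical, so that $\AS(\F R_1)$ still guarantees the switch point is reached almost surely. The infinite-memory nature of $\vec{\tau}_{-i}$ is harmless here, since we only need the existence of the resulting measure and not a finite representation.
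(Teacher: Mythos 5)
Your proof is correct, and for the second assertion it takes a genuinely different (and in fact more careful) route than the paper's. Both arguments proceed by contradiction and both rest on the determinacy identification $\overline{\sem{\NZ(\G\overline{R_2})}_i}=\sem{\AS(\F R_2)}_{-i}$, which in this paper comes from \cref{lem:ascorrectness} with the roles of the players exchanged rather than from \cite{Martin75} directly. From the contradiction hypothesis together with $\AS(\F R_1)$, the paper's proof asserts $\Pr^{\vec{\sigma}}_s(\G\,\sem{\AS(\F R_2)}_{-i})=1$, deduces $s\in\sem{\AS(\F R_2)}_{-i}$, and concludes that $\sigma_i$ cannot win $\NZ(\G\overline{R_2})$ from $s$; however, the step from ``almost surely $\G(\overline{R_1}\cup\sem{\AS(\F R_2)}_{-i})$ and $\F R_1$'' to ``almost surely $\G\,\sem{\AS(\F R_2)}_{-i}$'' is not justified as written, since the play may pass through states of $\overline{R_1}\cap\sem{\NZ(\G\overline{R_2})}_i$ before reaching $R_1$. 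What the hypotheses actually give is only that the first visit to $R_1$ almost surely lands in $\sem{\AS(\F R_2)}_{-i}$, and your switching construction is exactly the right way to exploit this weaker fact: an opponent that copies $\vec{\sigma}'_{-i}$ until the first $R_1$-visit (so the pre-switch measure, hence the almost-sure arrival in $\sem{\AS(\F R_2)}_{-i}$, is preserved) and then enforces $\F R_2$ almost surely defeats this particular $\sigma_i$ on $\NZ(\G\overline{R_2})$. The price is the explicit composite strategy and the Markov/shifting identity, but the paper already uses both ingredients in the proof of \cref{lem:ascorrectness}, so nothing new is needed; your argument could essentially replace the paper's. Your treatment of the first assertion is the same total-probability decomposition as the paper's, only packaged more simply.
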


\begin{proof}
    By law of total probability:
    \[
        \Pr^{\vec{\sigma}}_s(\G \overline{R_2}) = 
        \sum_{hs\in (S\backslash R_1)^* R_1}
            \Pr^{\vec{\sigma}}(\G \overline{R_2} \wedge \cyl(hs)) > 0
    \]
    So there exists an history $hs$ such that
    $0<\Pr^{\sigma}_s(
        \G \overline{R_2} \wedge \cyl(hs)) \leq \Pr^{\sigma}_s(\F R_1 \wedge \G \overline{R_2})
    $

    Assume $\sigma_i$ is winning for
    $\AS(\F R_1)\wedge \NZ(\G \overline{R_2})$ from $s$,
    and by contradiction
    $\Pr^{\vec{\sigma}}_s(\G (\overline{R_1} \cup 
        \sem{\AS(\F R_2)}_{-i}))=1$.
    Since $\sigma_i$ is winning for $\AS(\F R_1)$, we have
    in particular (against $\sigma_{-i}$),
    $\Pr^{\vec{\sigma}}_s(\F R_1) = 1$ so
    $\Pr^{\vec{\sigma}}_s(
        \G\sem{\AS(\F R_2)}_{-i})=1$.
    In particular, $s\in \sem{\AS(\F R_2)}_{-i}$ so $\sigma_{-i}$ is
    not winning for $\NZ(\G \overline{R_1})$ from $s$, hence a contradiction.
\end{proof}

\fi

\end{document}